\documentclass[aps,prx,10pt,twocolumn,superscriptaddress,nofootinbib,longbibliography]{revtex4-2}

\usepackage[T1]{fontenc}
\usepackage{lmodern}
\usepackage{microtype}
\usepackage{amsmath,amssymb,amsthm,mathtools}
\usepackage{booktabs}
\usepackage{enumitem}
\usepackage{float}
\usepackage{tikz}
\usepackage{placeins}
\usepackage{needspace}
\usepackage{xcolor}
\usepackage[hidelinks]{hyperref}
\usepackage[nameinlink,capitalise]{cleveref}

\hypersetup{pdftitle={Quantum Approximation Complexity of Classical Optimization Problems},pdfauthor={Stuart Hadfield}}

\newtheorem{theorem}{Theorem}
\newtheorem{lemma}[theorem]{Lemma}
\newtheorem{proposition}[theorem]{Proposition}
\newtheorem{corollary}[theorem]{Corollary}
\newtheorem{definition}[theorem]{Definition}
\newtheorem{remark}[theorem]{Remark}

\newcommand{\APX}{\mathsf{APX}}
\newcommand{\PTAS}{\mathsf{PTAS}}
\newcommand{\FPTAS}{\mathsf{FPTAS}}
\newcommand{\RAPX}{\mathsf{R\text{-}APX}}
\newcommand{\RPTAS}{\mathsf{R\text{-}PTAS}}
\newcommand{\RFPTAS}{\mathsf{R\text{-}FPTAS}}
\newcommand{\BQAPX}{\mathsf{BQ\text{-}APX}}
\newcommand{\BQPTAS}{\mathsf{BQ\text{-}PTAS}}
\newcommand{\BQFPTAS}{\mathsf{BQ\text{-}FPTAS}}
\newcommand{\RLOGAPX}{\mathsf{R\text{-}LOG\text{-}APX}}
\newcommand{\RPOLYAPX}{\mathsf{R\text{-}POLY\text{-}APX}}
\newcommand{\REXPAPX}{\mathsf{R\text{-}EXP\text{-}APX}}
\newcommand{\BQLOGAPX}{\mathsf{BQ\text{-}LOG\text{-}APX}}
\newcommand{\BQPOLYAPX}{\mathsf{BQ\text{-}POLY\text{-}APX}}
\newcommand{\BQEXPAPX}{\mathsf{BQ\text{-}EXP\text{-}APX}}
\newcommand{\RAPTAS}{\mathsf{R\text{-}APTAS}}
\newcommand{\FBPP}{\mathsf{FBPP}}
\newcommand{\BQP}{\mathsf{BQP}}
\newcommand{\NP}{\mathsf{NP}}
\newcommand{\OPT}{\operatorname{OPT}}
\newcommand{\ord}{\operatorname{ord}}
\newcommand{\CMO}{\textnormal{\textsc{CMO}}}
\newcommand{\DLog}{\textnormal{\textsc{DiscreteLog}}}
\newcommand{\Factor}{\textnormal{\textsc{Factor}}}
\newcommand{\DLogFit}{\textnormal{\textsc{DLog-Fit}}}
\newcommand{\Tr}{\operatorname{Tr}}

\begin{document}
\raggedbottom
\title{Quantum Approximation Complexity of Classical Optimization Problems}
\author{Stuart Hadfield}
\email{shadfield@usra.edu}
\thanks{ORCID: \href{https://orcid.org/0000-0002-4607-3921}{0000-0002-4607-3921}.}
\affiliation{USRA Research Institute for Advanced Computer Science, Mountain View, California 94043, USA}
\date{October 5, 2026}
\begin{abstract}
Classical approximation complexity asks what solution quality can be guaranteed with polynomial-time computation.  The classes \(\APX\), \(\PTAS\), and \(\FPTAS\) distinguish a fixed approximation ratio, approximation to any fixed accuracy, and approximation schemes whose running time is also polynomial in inverse accuracy.  Their randomized counterparts are \(\RAPX\), \(\RPTAS\), and \(\RFPTAS\).  We define bounded-error quantum counterparts \(\BQAPX\), \(\BQPTAS\), and \(\BQFPTAS\).  Membership requires a uniform quantum algorithm that, on every input, returns a feasible classical solution achieving at least the claimed approximation ratio with probability at least \(2/3\).  The solution's score (objective value) must be efficiently classically computable.  Running time includes parameter selection, preparation, measurement, decoding, and repetition.  Many quantum optimization methods are used heuristically, and high benchmark scores alone do not establish these guarantees.  We further establish a conditional hierarchy for logarithmic, polynomial, and exponential approximation factors.

Assuming \(\NP\nsubseteq\BQP\), the quantum classes form a strict hierarchy.  Problems based on prime factorization and discrete logarithms give conditional quantum--classical separations.  Certified Maximum Order has an exact quantum algorithm, while any randomized polynomial-time algorithm guaranteeing at least an inverse-polynomial approximation ratio on every input would yield efficient factoring.  Discrete-Logarithm Fitting has an exact quantum algorithm and a deterministic one-half approximation, but any fixed improvement over one half would give a randomized polynomial-time algorithm for the safe-prime discrete logarithm problem.  Our results show, under explicit complexity assumptions, that quantum computation can improve worst-case approximation guarantees.  A quantum--classical gap in worst-case output ratios for common problems such as MaxCut or MaxSAT remains open.
\end{abstract}
\maketitle

\section{Introduction}
\label{sec:intro}

Quantum optimization asks whether quantum computers can provide advantages in finding classical solutions to hard combinatorial problems~\cite{Abbas2024}.  Despite much effort, many quantum algorithms remain heuristic and lack rigorous performance guarantees across all inputs or bounds on total runtime.  \emph{Approximation complexity} classifies optimization problems by the solution quality that algorithms can guarantee with polynomial-time computation.  A solution's score is its objective value.  For maximization, the achieved approximation ratio is the score divided by the optimum.  The classical classes \(\APX\), \(\PTAS\), and \(\FPTAS\) contain problems with increasingly demanding polynomial-time approximation guarantees~\cite{PapadimitriouYannakakis1991,Ausiello1999}.  A problem is in \(\APX\) if a deterministic polynomial-time algorithm achieves some fixed approximation ratio on every instance.  It is in \(\PTAS\) if one scheme achieves any requested accuracy in polynomial time for each fixed accuracy level.  It is in \(\FPTAS\) if the running time is also polynomial in inverse accuracy.  For example, MaxCut and Max-E3SAT are in \(\APX\), Planar Vertex Cover is in \(\PTAS\), and 0--1 Knapsack is in \(\FPTAS\)~\cite{GoemansWilliamson1995,Hastad2001,Baker1994,IbarraKim1975}.  A glossary of the complexity classes appears in Appendix~\ref{app:class-glossary}.

Quantum optimization routines can produce candidate solutions with high scores on simulated or experimental test instances~\cite{KadowakiNishimori1998,Farhi2014,Hadfield2019,MarshWang2019,MaciejewskiNDAR2025,MaciejewskiNDAWS2026,LotshawIWS2026,Sciorilli2025}.  Such scores alone establish neither a worst-case ratio nor a bound on the cost of parameter search and sampling~\cite{Abbas2024}.  The quantum approximate optimization algorithm (QAOA)~\cite{Farhi2014,Hadfield2019} also has rigorous guarantees in specific settings.  Analytic depth-one MaxCut formulas and fixed-angle bounds for unweighted 3-regular graphs give rigorous bounds on expected approximation ratios~\cite{Farhi2014,WangHadfieldJiangRieffel2018,WurtzLove2021}.

Assessing a worst-case approximation advantage requires a uniformly specified algorithm, a guaranteed ratio for the returned solution, and a bound on the full computational cost.  We work with NP optimization (NPO) problems, whose candidate solutions have polynomial length and whose feasibility and scores are checkable in polynomial time.  We extend the classical approximation classes to bounded-error quantum algorithms on classical inputs.  On every input, such an algorithm must return a feasible classical solution meeting the claimed ratio with probability at least \(2/3\).  All quantum and classical algorithmic steps count toward running time.  We use the prefix ``BQ'' to emphasize bounded error and avoid the established use of \(\mathsf{QPTAS}\) for a quasi-polynomial-time approximation scheme.  We ask whether the deterministic, randomized, and bounded-error quantum versions of each class coincide and, if not, which optimization problems separate them.  A problem in a BQ class but outside its randomized counterpart would establish a conditional quantum advantage in guaranteed solution quality or in the running time needed to achieve a requested accuracy.

Under \(\NP\nsubseteq\BQP\), known classical approximation algorithms and hardness results imply that Max-E3SAT has a deterministic constant-ratio approximation but no quantum PTAS, while Planar Vertex Cover has a deterministic PTAS but no quantum FPTAS.  Hence these problems witness the strict hierarchy in \Cref{prop:hierarchy},
\[
\BQFPTAS\subsetneq\BQPTAS\subsetneq\BQAPX.
\]
These examples establish strict inclusions among the quantum classes without separating a quantum class from its randomized classical counterpart.

Certified Maximum Order (CMO) gives our strongest conditional quantum--classical separation.  On any integer \(N\ge3\), a feasible solution certifies the multiplicative order of an element modulo \(N\), and its score is that order.  The optimum is the Carmichael function \(\lambda(N)\), the largest order of an element coprime to \(N\).  A quantum algorithm using factoring and period finding constructs an exact optimum~\cite{Shor1997}.  Conversely, any polynomial-factor randomized approximation would yield a randomized polynomial-time factoring algorithm.  Thus, if \(\Factor\notin\FBPP\), then \(\CMO\in\BQFPTAS\setminus\RPOLYAPX\) (\Cref{thm:cmo}).  The same witness makes each of the \(\RAPX\subseteq\BQAPX\), \(\RPTAS\subseteq\BQPTAS\), and \(\RFPTAS\subseteq\BQFPTAS\) containments strict.

Discrete-Logarithm Fitting gives another conditional quantum--classical gap, this time at a one-half approximation threshold.  It scores a hypothesis by the fraction of labels it predicts correctly on an explicit sample.  Predicting the more common label gives a deterministic \(1/2\)-approximation, while quantum discrete logarithm finds a perfect fit.  Any fixed randomized improvement over \(1/2\) would solve discrete logarithm in a prime-order subgroup modulo a safe prime \(p=2q+1\), where \(q\) is also prime.  Under the assumption that this search problem is not in \(\FBPP\), \(\DLogFit\in\RAPX\cap\BQFPTAS\setminus\RPTAS\) (\Cref{thm:keyfit} in Appendix~\ref{app:keyfit}).

\Cref{prop:search-padding} shows how a similar one-half threshold can arise generically from a search task with efficiently verifiable witnesses.  It constructs a problem in NPO.  Every instance has an easy feasible dummy solution of score one, while a proposed witness of score two can be checked in polynomial time even if finding one is classically hard.  The dummy gives a deterministic \(1/2\)-approximation, and any fixed improvement over \(1/2\) recovers a witness.  If quantum polynomial time can find the witnesses but randomized classical polynomial time cannot, the constructed problem lies in \(\RAPX\cap\BQFPTAS\setminus\RPTAS\).  Unlike Discrete-Logarithm Fitting's objective based on sample accuracy, this construction has only two possible scores.

For bounded maximization scores, the decoded-mean transfer in \Cref{cor:collapse} turns an efficiently decoded expected score into a bounded-error classical-output guarantee.  Under \(\NP\nsubseteq\BQP\), the companion paper rules out uniformly efficient state preparation that recovers a fixed fraction of the optimal MaxCut improvement above the random-cut baseline on every graph~\cite[Theorem~1]{HadfieldApproxQOpt2026}.  Fixed-depth QAOA on unweighted 3-regular MaxCut yields a cut guarantee for that graph family after repetition (\Cref{sec:decoder}).  Computing the value \(\lambda(N)\) alone does not return a certified optimizer for \(\CMO\) (\Cref{sec:cmo}).

\Cref{fig:class-containments} places five example problems at four class positions, with Max-E3SAT and MaxCut sharing point A.  MaxCut lies in both constant-ratio classes and, if \(\NP\nsubseteq\BQP\), outside \(\BQPTAS\).  \Cref{tab:extended-witnesses} and Appendix~\ref{app:growing-proofs} give a conditional growing-factor hierarchy using classical witnesses.  Previously, Pirnay et al. gave conditional quantum advantages on constructed cryptographic instance families, while Szegedy outlined a broader gap-reduction route~\cite{Pirnay2024,Szegedy2022}.  For unrestricted MaxCut or MaxSAT, the central open problem is a quantum algorithm output ratio beyond what randomized classical algorithms can attain on the same problem.  For QAOA and compressed encodings, another task is to bound both the quality of the returned solutions and the cost of producing them.

We give the class definitions in Section~\ref{sec:classes}, the decoded-mean result in Section~\ref{sec:decoder}, and the separating constructions in Section~\ref{sec:arithmetic}.  Sections~\ref{sec:separation-implications} and~\ref{sec:scope} discuss implications of our results and connections to existing quantum output guarantees.
\begin{figure}[t]
\centering
\begin{tikzpicture}[x=1.04cm,y=1.04cm,
  quantum/.style={draw=blue!55!black, line width=.75pt},
  classical/.style={draw=red!55!black, densely dashed, line width=.75pt},
  name/.style={font=\footnotesize, fill=white, inner sep=.7pt}]
  \fill[blue!3] (0,0) circle[radius=3.45];
  \fill[red!5,fill opacity=.45] (1.35,0) circle[radius=1.8];
  \draw[quantum] (0,0) circle[radius=3.45];
  \draw[quantum] (0,0) circle[radius=2.6];
  \draw[quantum] (0,0) circle[radius=1.9];
  \draw[classical] (1.35,0) circle[radius=1.8];
  \draw[classical] (1.35,0) circle[radius=1.01];
  \draw[classical] (1.35,0) circle[radius=.26];
  \node[name,text=blue!55!black] at (0,-3.02) {\(\BQAPX\)};
  \node[name,text=blue!55!black] at (0,-2.24) {\(\BQPTAS\)};
  \node[name,text=blue!55!black] at (-.95,.70) {\(\BQFPTAS\)};
  \node[name,text=red!55!black] at (1.35,1.42) {\(\RAPX\)};
  \node[name,text=red!55!black] at (1.08,.62) {\(\RPTAS\)};
  \draw[red!55!black, line width=.5pt] (1.29,-.25)--(1.13,-.44);
  \node[name,text=red!55!black] at (1.08,-.59) {\(\RFPTAS\)};
  \draw[magenta!70!black,fill=white,line width=.7pt] (2.80,.10) circle[radius=.05];
  \node[name,font=\scriptsize\bfseries] at (2.82,.42) {A};
  \draw[magenta!70!black,fill=white,line width=.7pt] (2.10,-.26) circle[radius=.05];
  \node[name,font=\scriptsize\bfseries] at (2.12,-.56) {B};
  \fill[black] (-1.28,-.70) circle[radius=.05];
  \node[name,font=\scriptsize\bfseries] at (-1.28,-1.01) {C};
  \fill[black] (.34,-1.13) circle[radius=.05];
  \node[name,font=\scriptsize\bfseries] at (.68,-1.37) {D};
\end{tikzpicture}
\caption{Schematic class inclusions with four markers for five example problems.  Solid blue circles show quantum classes and dashed red circles randomized classical classes.  At each level, the deterministic class \(\APX\), \(\PTAS\), or \(\FPTAS\) is contained in its randomized and quantum counterparts.  The deterministic boundaries are not drawn separately.  Open markers \textbf{A} and \textbf{B} have deterministic approximation algorithms; filled markers \textbf{C} and \textbf{D} mark conditional quantum--classical ratio separations.  Under \(\NP\nsubseteq\BQP\), Max-E3SAT and MaxCut share marker \textbf{A} in \(\RAPX\setminus\BQPTAS\), while Planar Vertex Cover occupies marker \textbf{B} in \(\RPTAS\setminus\BQFPTAS\)~\cite{Hastad2001,GoemansWilliamson1995,Baker1994}.  Marker \textbf{C} (\(\CMO\)) lies in \(\BQFPTAS\setminus\RAPX\) if \(\Factor\notin\FBPP\), and is even outside \(\RPOLYAPX\) (\Cref{thm:cmo}).  Marker \textbf{D} is the sample-fitting threshold example \(\DLogFit\) from Appendix~\ref{app:keyfit}.  It lies in \(\BQFPTAS\setminus\RPTAS\) if safe-prime discrete logarithm is not in \(\FBPP\).}
\label{fig:class-containments}
  \end{figure}

\section{Classical-output quantum approximation classes}
\label{sec:classes}

A maximization problem in NP optimization (NPO), \(\Pi=(I,F,v)\), has polynomial-time recognizable instances \(x\in I\), polynomially bounded candidate strings, a nonempty feasible set \(F(x)\) recognized in deterministic polynomial time, and a nonnegative rational objective \(v(x,y)\) computable in polynomial time~\cite{PapadimitriouYannakakis1991,Ausiello1999}.  We assume
\(\OPT(x)=\max_{y\in F(x)}v(x,y)>0\).  The score of a feasible \(y\) is \(v(x,y)\), and its achieved ratio is \(v(x,y)/\OPT(x)\in[0,1]\).  For minimization, \(v\) is a strictly positive cost and \(\OPT\) its minimum, giving a cost ratio at least one.

\begin{definition}[Bounded-error output approximation]
\label{def:output}
A uniform randomized classical or quantum algorithm is an \(\alpha\)-approximation for a maximization problem, with fixed \(\alpha\in(0,1]\), if, on every input \(x\), it outputs \(y\in F(x)\) satisfying
\[
  v(x,y)\ge \alpha\OPT(x)
\]
with probability at least \(2/3\).  An invalid output is a failure.  For minimization, the corresponding condition is \(v(x,y)\le\rho\OPT(x)\), where \(\rho\ge1\).
\end{definition}

The \(2/3\) threshold is a conventional success probability, distinct from the approximation ratio.  For example, suppose a uniform polynomial-time procedure prepares a quantum state, and an efficient measurement and decoder return a solution meeting the claimed ratio with probability at least \(1/q(n)\) on every \(n\)-bit input, for a known polynomial \(q(n)\ge1\).  Repeating the entire procedure \(\lceil q(n)\ln 3\rceil\) times and keeping the best feasible output raises success to at least \(2/3\), without knowing \(\OPT\).  For schemes, the repetition count must respect the \(\PTAS\) or \(\FPTAS\) time bound; for an \(\FPTAS\), it must be polynomial jointly in \(n\) and \(1/\epsilon\).  Conversely, \(O(\log(2+n))\) repetitions reduce error from at most \(1/3\) to any fixed inverse polynomial in \(n\).  All steps and repetitions count toward running time; see \Cref{prop:conventions}.

All algorithms are uniformly specified.  A quantum algorithm may use intermediate measurements but no quantum advice.  Circuit generation, classical parameter search (including its circuit evaluations), preparation, measurement shots, decoding, verification, repetitions, and output selection all count toward its time.  For a checkable search relation \(S\), the shorthand \(S\in\FBPP\) means that a randomized polynomial-time algorithm returns a valid witness with probability at least \(2/3\) on every input of the stated domain, which may be a promise family.  It makes no claim about inputs outside that domain.  We use this convention for \(\Factor\) and \(\DLog\) below.

\begin{definition}[Approximation classes]
\label{def:classes}
\(\RAPX\) is the randomized analogue of classical \(\APX\).  It contains the NPO problems admitting a randomized polynomial-time \(\alpha\)-approximation for some constant \(\alpha>0\) in the maximization convention, or some constant \(\rho\ge1\) for minimization.  Let \(\epsilon\in\mathbb Q\cap(0,1)\) be supplied in binary, and let \(\ell(\epsilon)\) denote its encoding length.  A problem lies in \(\RPTAS\) if one uniform randomized algorithm \(\mathcal A\) receives \((x,\epsilon)\), succeeds with probability at least \(2/3\), and, for every fixed \(\epsilon\), has running time at most \(p_\epsilon(|x|)\) for some polynomial \(p_\epsilon\).  Its guarantee is \((1-\epsilon)\) for maximization and \((1+\epsilon)\) for minimization.  It lies in \(\RFPTAS\) if one polynomial bounds the running time in \(|x|\), \(1/\epsilon\), and \(\ell(\epsilon)\), uniformly for every \((x,\epsilon)\).  Replacing randomized classical computation by a uniform bounded-error quantum algorithm defines \(\BQAPX\), \(\BQPTAS\), and \(\BQFPTAS\) with the same quantifiers and guarantees.
\end{definition}

The deterministic classes obey \(\APX\subseteq\RAPX\), \(\PTAS\subseteq\RPTAS\), and \(\FPTAS\subseteq\RFPTAS\).

\paragraph{Accuracy encoding.}
The term \(\ell(\epsilon)\) charges the time to read a binary rational.  For instance, dyadic accuracies \(\epsilon=2^{-k}\) have \(\ell(\epsilon)=O(1+\log(1/\epsilon))\), whereas an arbitrary rational near one can have a much longer encoding.  Our placements use fixed accuracies or polynomial-length encodings and are insensitive to this choice.  The output convention also applies to other accuracy bounds, such as an \(\mathsf{EPTAS}\) with time \(f(1/\epsilon)\operatorname{poly}(|x|)\) and polynomial degree independent of \(\epsilon\)~\cite{CesatiTrevisan1997}.  Each quantum analogue retains classical feasibility and score tests and charges the full algorithm.  An exact bounded-error quantum optimizer may ignore \(\epsilon\) and belongs to all three quantum classes.  The immediate containments are
\[
\begin{gathered}
 \RFPTAS\subseteq\RPTAS\subseteq\RAPX,\\
 \BQFPTAS\subseteq\BQPTAS\subseteq\BQAPX,\\
 \RAPX\subseteq\BQAPX,\\
 \RPTAS\subseteq\BQPTAS,\\
 \RFPTAS\subseteq\BQFPTAS.
\end{gathered}
\]

Membership transfers backward under an explicit ratio-preserving solution reduction.  For maximization problems \(\Pi\) and \(\Gamma\), this means polynomial-time maps \(f\) on instances and \(g\) on feasible outputs, with \(g(x,z)\in F_\Pi(x)\) for every \(z\in F_\Gamma(f(x))\), such that
\[
 \frac{v_\Pi(x,g(x,z))}{\OPT_\Pi(x)}\ge
 \frac{v_\Gamma(f(x),z)}{\OPT_\Gamma(f(x))}.
\]
Composing these maps with an approximation algorithm for \(\Gamma\) preserves its ratio and success probability for \(\Pi\).  For minimization, the analogous condition reverses the displayed inequality.  Polynomial input and output overhead preserve each of the corresponding \(\APX\), \(\PTAS\), and \(\FPTAS\) time bounds, including their randomized and quantum versions.  Under \(\Factor\notin\FBPP\), no classical \(\APX\) problem can be \(\BQAPX\)-complete under these reductions: \(\CMO\in\BQAPX\setminus\APX\) by \Cref{thm:cmo}, and a reduction to a classical \(\APX\) problem would give \(\CMO\) a classical constant-factor algorithm.

The following hierarchy is a direct adaptation of standard classical witnesses.  Its role is to check that the definitions retain the familiar distinctions among \(\APX\), \(\PTAS\), and \(\FPTAS\).

\begin{proposition}[Conditional quantum hierarchy]
\label{prop:hierarchy}
If \(\NP\nsubseteq\BQP\), then
\[
  \BQFPTAS\subsetneq\BQPTAS\subsetneq\BQAPX.
\]
\end{proposition}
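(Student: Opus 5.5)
Both inclusions are immediate from the containments displayed after \Cref{def:classes}, so only strictness needs proof. I will use the two witnesses named in the introduction: Max-E3SAT for \(\BQPTAS\subsetneq\BQAPX\), and Planar Vertex Cover for \(\BQFPTAS\subsetneq\BQPTAS\). Membership on the larger side is classical. Max-E3SAT has a deterministic \(7/8\)-approximation (random assignment derandomized by conditional expectations), so it lies in \(\APX\subseteq\RAPX\subseteq\BQAPX\). Planar Vertex Cover has Baker's deterministic PTAS~\cite{Baker1994}, so it lies in \(\PTAS\subseteq\RPTAS\subseteq\BQPTAS\). The real work is the non-membership claims under \(\NP\nsubseteq\BQP\).

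For Max-E3SAT, I will invoke H\aa stad's gap result~\cite{Hastad2001}. For every fixed \(\delta>0\) there is a deterministic polynomial-time reduction from SAT to E3SAT instances \(\phi\). Satisfiable inputs map to satisfiable \(\phi\), and unsatisfiable inputs map to \(\phi\) with \(\OPT(\phi)\le(7/8+\delta)m\). Suppose Max-E3SAT were in \(\BQPTAS\). Fix \(\epsilon\) with \(1-\epsilon>7/8+\delta\) and run the scheme at that fixed \(\epsilon\), which takes time \(p_\epsilon(|x|)\) and is polynomial. Accept iff the returned assignment, whose score is classically checked, satisfies more than \((7/8+\delta)m\) clauses. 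On satisfiable inputs this happens with probability at least \(2/3\), and on unsatisfiable inputs it never happens, since any valid output obeys \(v\le\OPT\). This puts SAT in \(\BQP\) (even with one-sided error), hence \(\NP\subseteq\BQP\).

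For Planar Vertex Cover, I will use the standard argument that a strongly NP-hard problem with polynomially bounded integer optimum has no FPTAS. Unweighted Planar Vertex Cover is NP-hard, and \(\OPT\le n\). Given a \(\BQFPTAS\) scheme, run it with \(\epsilon=1/(n+1)\), encoded in \(O(\log n)\) bits. The running time is polynomial in \(n\), and a successful output satisfies \(v\le(1+\epsilon)\OPT<\OPT+1\). Because costs are integers, \(v=\OPT\). To decide whether \(\OPT\le k\), accept iff the returned feasible cover has size at most \(k\). Feasibility of the cover is checked classically, and invalid outputs are rejected. This is again one-sided bounded error, so the NP-complete decision problem lies in \(\BQP\).

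The main obstacle is the bookkeeping that ties \Cref{def:output} to \(\BQP\) membership, and not the hardness results themselves. One has to check three things: that the failure event (including invalid outputs) only affects YES instances; that the quantum algorithm is uniform with the accuracy parameter fixed or polynomially encoded; and that the whole procedure, including reduction, scheme run and classical verification, is a uniform polynomial-time bounded-error quantum algorithm. Once this is verified, the success probability of \(2/3\) gives a \(\BQP\) decider directly, and amplification is optional. In the minimization case, the one point to handle carefully is the integrality argument.
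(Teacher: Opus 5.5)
Your proposal is correct and follows the paper's proof essentially step for step: you use H\aa{}stad's gap at a fixed $\epsilon$ to exclude Max-E3SAT from $\BQPTAS$, and Baker's PTAS plus an integrality argument at $\epsilon\approx 1/n$ to exclude Planar Vertex Cover from $\BQFPTAS$, in each case with one-sided verification. The paper differs only in two small details: it uses $\epsilon=1/(2n)$ instead of your $1/(n+1)$, which makes no difference, and it explicitly restricts Planar Vertex Cover to graphs with at least one edge so that $\OPT\ge1$, as the strictly-positive-cost NPO convention requires.
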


\noindent Appendix~\ref{app:hierarchy} gives the gap and integrality arguments using Max-E3SAT and Planar Vertex Cover.  MaxCut provides another witness for \(\BQPTAS\subsetneq\BQAPX\).  On unweighted graphs with an edge, a deterministic \(1/2\)-approximation gives \(\mathsf{MaxCut}\in\APX\subseteq\RAPX\subseteq\BQAPX\), while its NP-hard approximation gap excludes \(\BQPTAS\) if \(\NP\nsubseteq\BQP\)~\cite{Hastad2001,GoemansWilliamson1995}.  The companion paper derives the same \(\BQPTAS\) exclusion from decoded means by converting any fixed ratio above MaxCut's hardness threshold into a bounded-error output guarantee~\cite[Prop.~1]{HadfieldApproxQOpt2026}.

A restriction can change the ratio and class placement.  On bipartite graphs with an edge, an efficiently found bipartition cuts every edge, so the restricted MaxCut problem is exactly solvable and lies in \(\FPTAS\).

\begin{proposition}[Uniform conventions and amplification]
\label{prop:conventions}
In the definitions above, replacing success probability \(2/3\) by any other fixed constant in \((0,1)\) leaves each class unchanged.  For schemes, one can demand success probability at least \(1-\xi\), for a specified \(\xi\in(0,1)\), with an additional \(O(\log(1/\xi))\) factor in running time, provided its encoding is charged.  Fixing \(\epsilon\) in a uniform \(\PTAS\) yields an algorithm family.  The converse requires an effective uniform generator for that family with the stipulated running-time bound.
\end{proposition}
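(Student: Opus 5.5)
The plan is to prove the amplification claims in both directions (lowering and raising the success probability) by independent repetition with best-output selection, and then to treat the uniformity remark separately. The key fact is that the objective \(v(x,y)\) and the feasibility test for \(F(x)\) are deterministic polynomial-time classical procedures. After \(k\) independent runs of a procedure, we can therefore discard invalid outputs and keep a feasible output of largest score for maximization, or smallest cost for minimization, without knowing \(\OPT(x)\). Call a run successful if its output meets the claimed ratio. Any feasible output scoring at least as well as a successful run also meets the ratio. Hence the selected output fails only if all \(k\) runs fail.

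If a single run succeeds with probability at least \(p\), then \(k\) runs fail together with probability at most \((1-p)^k\le e^{-pk}\). This holds for classical randomness and also for a quantum algorithm, since each run is a fresh execution on the same classical input ending in a measurement. To raise the success probability from a constant \(c\in(0,1)\) to \(2/3\), we take \(k=\lceil \ln 3/c\rceil=O(1)\). To go the other way, from \(2/3\) to any \(c'\), note that an algorithm succeeding with probability at least \(2/3\) already succeeds with probability at least \(c'\) whenever \(c'\le 2/3\), and amplification handles \(c'>2/3\). A constant number of repetitions preserves every time bound: polynomial for \(\APX\), \(p_\epsilon\) for each fixed \(\epsilon\) for \(\PTAS\), and one joint polynomial for \(\FPTAS\). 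This shows that each class is unchanged.

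For schemes with a target failure probability \(\xi\), we take \(k=\lceil \ln(1/\xi)/\ln 3\rceil\), so that \(3^{-k}\le\xi\). The running time then becomes \(O(\log(1/\xi))\) times the scheme's running time, plus the cost of the comparisons. Computing \(k\) requires reading \(\xi\) in binary and estimating its logarithm. We charge this as \(O(\ell(\xi))\), using the bit length of the denominator of \(\xi\), which upper bounds \(\log_2(1/\xi)\) up to an additive constant. This is the sense in which the encoding of \(\xi\) is charged.

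The final claims concern uniformity. Fixing \(\epsilon\) and hard-wiring it into \(\mathcal A\) yields an algorithm \(\mathcal A_\epsilon\) running in time \(p_\epsilon\). For the converse, suppose we are given a family \(\{\mathcal A_\epsilon\}\) together with an effective generator that maps \(\epsilon\) to a description of \(\mathcal A_\epsilon\) within the stipulated bound, meaning generation time depending only on \(\epsilon\) or polynomially bounded for the \(\FPTAS\) case. Then the combined algorithm runs the generator and simulates \(\mathcal A_\epsilon\) on \(x\). In the quantum case, the generator outputs a classical circuit description and the simulation executes that circuit. Without such a generator the family need not be uniform, which explains why the hypothesis is needed.

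I expect the main obstacle to be a subtle point rather than any calculation. For quantum algorithms, selecting the best output relies on classical recomputation of scores after measurement, and no coherent selection is involved. We must also check that independent reruns are permitted, since the input is classical and no advice is used. Finally, the \(\FPTAS\) time bound must absorb the overhead from \(\ell(\xi)\), which is why that encoding has to be charged.
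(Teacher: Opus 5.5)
Your proposal is correct and follows the paper's proof. Both use independent reruns with classical verification and best-candidate selection, so the selected output fails only if every run fails, without knowing the optimum. A constant number of repetitions moves between fixed success constants, and $O(\log(1/\xi))$ repetitions, with the encoding of $\xi$ charged, reach failure probability $\xi$. The uniformity remark is handled by specializing the uniform algorithm in one direction and by noting that the map from accuracy to algorithm need not be computable in the other. You are somewhat more explicit than the paper, for example by treating both directions of the change of success constant and by bounding $\log_2(1/\xi)$ via the denominator's bit length.
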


\begin{proof}
Run independently, reject invalid outputs, and retain the best feasible maximization candidate (or the minimization candidate with the lowest cost).  If a run succeeds with probability at least a fixed \(c>0\), \(k\) trials succeed with probability at least \(1-(1-c)^k\), without knowing \(\OPT\).  Taking \(k=O(\log(1/\xi))\) gives the stated bound.  With fixed \(k\), the scheme runtimes are unchanged up to constants.  For \(\FPTAS\), a requested \(\xi\) such as \(\epsilon^a\) adds only polynomial overhead in \(1/\epsilon\).  The statement about families follows by specializing the uniform algorithm.  Even if each fixed accuracy has a polynomial-time algorithm, the map from accuracy to algorithm need not be computable, so the family alone supplies no uniform \(\PTAS\).
\end{proof}

The same output convention admits factors that grow with the full input length \(N=|x|\).  \Cref{tab:extended-witnesses} summarizes further placements.  In the first three rows, LOG, POLY, and EXP describe logarithmic, polynomial, and exponential approximation factors in \(N\), while the algorithms still run in polynomial time.  Appendix~\ref{app:growing-proofs} gives the precise class definitions and proofs.

\begin{table}[b]
\centering
\footnotesize
\setlength{\tabcolsep}{4pt}
\begin{tabular}{@{}p{0.28\linewidth}p{0.30\linewidth}p{0.34\linewidth}@{}}
\toprule
Problem & Classical status & Quantum status \\
\midrule
Set Cover~\cite{Chvatal1979,Feige1998,DinurSteurer2014} & Logarithmic factor (\(\RLOGAPX\)) & Outside \(\BQAPX\) \\
Max Clique~\cite{Zuckerman2007} & Polynomial factor (\(\RPOLYAPX\)) & Outside \(\BQLOGAPX\) \\
Positive-Weight TSP~\cite{SahniGonzalez1976} & Exponential factor (\(\REXPAPX\)) & Outside \(\BQPOLYAPX\) \\
\(\CMO\) (\Cref{thm:cmo}) & In \(\REXPAPX\); outside \(\RPOLYAPX\) if \(\Factor\notin\FBPP\) & In \(\BQFPTAS\) \\
Bin Packing~\cite{FernandezDeLaVegaLueker1981} & \(\RAPX\cap\RAPTAS\) & Outside \(\BQPTAS\) \\
Rectilinear TSP~\cite{Arora1998,GareyGrahamJohnson1976} & \(\PTAS\) (\(\RPTAS\)) & Outside \(\BQFPTAS\) \\
0--1 Knapsack~\cite{IbarraKim1975} & \(\FPTAS\) (\(\RFPTAS\)) & In \(\BQFPTAS\) \\
\bottomrule
\end{tabular}
\caption{Further placements in the classical-output model.  TSP denotes the Traveling Salesperson Problem.  Its positive-weight row uses binary-encoded integer weights, and the rectilinear row uses integer-grid points.  The randomized asymptotic scheme \(\RAPTAS\) is defined in \Cref{prop:bin-packing}.  The first three rows establish \Cref{prop:growing-hierarchy}.  The CMO exponential-factor upper bound is unconditional; its classical lower bound assumes \(\Factor\notin\FBPP\).  Other quantum exclusions assume \(\NP\nsubseteq\BQP\).  The 0--1 Knapsack row is a membership example.}
\label{tab:extended-witnesses}
\end{table}

The Set Cover, Max Clique, both TSP, and Bin Packing exclusions follow from \(\NP\)-hardness.  The CMO and DLog-Fit witnesses use factoring and discrete logarithm, respectively.

\section{Decoded mean and classical output}
\label{sec:decoder}

Many quantum optimization results bound an expectation, while the classes above require a good classical solution with bounded error.  Here we show how an expected score after efficient decoding yields such an output guarantee, and conversely how an output guarantee yields a high decoded mean.  This connects expectation analyses to class membership when the required uniformity and runtime bounds hold.  We work through a fixed-depth MaxCut example.  For methods with nontrivial decoding, a state's energy need not equal the expected score of the decoded solution~\cite{Jordan2025,HadfieldApproxQOpt2026}.

Fix a maximization instance \(x\) with scores \(0\le v(x,y)\le V^*=\OPT(x)\).  An efficient decoder has outcomes \(F(x)\cup\{\bot\}\), where \(\bot\) denotes failure and has score \(v(x,\bot)=0\).  If its measurement is a positive operator-valued measure (POVM) \(\{M_y\}_{y\in F(x)}\cup\{M_\bot\}\), whose positive effects sum to the identity, its score observable is
\[
  G=\sum_{y\in F(x)}v(x,y)M_y,
\]
so, for a fixed decoder and prepared density operator \(\rho\), \(\Tr(G\rho)=\mathbb E[v(x,Y)\mid\rho,\text{decoder}]\).  If preparation or decoding uses classical randomness, average this identity over those choices.

\begin{definition}[Decoded-mean approximation]
A decoded-mean \(\alpha\)-approximation, for \(\alpha\in(0,1]\), is a uniform polynomial-time state-preparation and decoding algorithm satisfying \(\mathbb E[v(x,Y)]\ge\alpha\OPT(x)\) on every maximization instance.  Decoded-mean \(\PTAS\) and \(\FPTAS\) variants require mean at least \((1-\epsilon)\OPT(x)\) with the time bounds in the preceding definition.  The complete preparation and decoder, including their choice of parameters, are uniform in \((x,\epsilon)\) and may be rerun independently with fresh quantum and classical randomness.  Failure has score zero.  Every other decoder outcome is a polynomial-length feasible solution whose score can be computed efficiently.
\end{definition}

The elementary bounded-score inequality below was used for MaxCut in the companion paper's proofs for ordinary approximation ratios and gain barriers~\cite[Prop.~1 and Eq.~(14)]{HadfieldApproxQOpt2026}.  We state it for arbitrary decoded scores and use it in the three approximation-class regimes.

\begin{lemma}[Bounded-score mean-to-tail bound]
\label{lem:mean-tail}
Let \(X=v(Y)/V^*\in[0,1]\) and suppose \(\mathbb E X\ge\mu\).  For every \(0\le\tau<\mu\le1\),
\begin{equation}
  \Pr\!\left[X>\tau\right]\ge\frac{\mu-\tau}{1-\tau}.
  \label{eq:mean-tail}
\end{equation}
The bound is tight for every fixed \((\mu,\tau)\).
\end{lemma}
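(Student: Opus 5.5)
The plan is a reverse Markov (Paley--Zygmund-style) argument that uses only \(X\in[0,1]\), followed by an explicit two-point distribution for tightness. Let \(p=\Pr[X>\tau]\). I split the expectation according to the event \(\{X>\tau\}\). On this event \(X\le1\), because scores are bounded by \(V^*=\OPT(x)\). On the complement \(X\le\tau\). This gives
\[
\mu\le\mathbb E X\le p\cdot1+(1-p)\tau=\tau+p(1-\tau).
\]
Since \(\tau<\mu\le1\), we have \(1-\tau>0\), and rearranging yields \(p\ge(\mu-\tau)/(1-\tau)\), which is \eqref{eq:mean-tail}.

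Some care is needed at a few points. Failure outcomes \(\bot\) have score zero, so they fall into the event \(\{X\le\tau\}\), and the bound still holds. If preparation or decoding uses classical randomness, \(\mathbb E X\) is the average over that randomness as well, and the inequality applies to the averaged distribution unchanged. No independence or knowledge of \(\OPT\) is needed. The argument uses only the law of \(X\), so the identity \(\Tr(G\rho)=\mathbb E[v(x,Y)]\) connects it to the quantum setting.

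For tightness, fix \((\mu,\tau)\) with \(0\le\tau<\mu\le1\). I take the two-point distribution with \(X=1\) with probability \(p^*=(\mu-\tau)/(1-\tau)\in(0,1]\) and \(X=\tau\) otherwise. Then \(\mathbb E X=p^*+(1-p^*)\tau=\tau+p^*(1-\tau)=\mu\), and \(\Pr[X>\tau]=p^*\), so equality holds.

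To show this distribution is realizable as a decoded score distribution, one could give a toy maximization instance with feasible solutions of scores \(V^*\) and \(\tau V^*\). A decoder then outputs these with the stated probabilities, for example by measuring a suitably rotated qubit or by using classical coins. Scores are rational, and \(\tau,\mu\) may be arbitrary reals in the statement, so if exact realizability by instances is required, I would interpret tightness at the level of distributions on \([0,1]\). I expect this interpretation of what ``tight'' means to be the only delicate point. The inequality itself is elementary.
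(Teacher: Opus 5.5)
Your proposal is correct and matches the paper's proof: it bounds \(\mathbb E X\le\tau(1-p)+p\) using \(X\le\tau\) off the event and \(X\le1\) on it, then rearranges. For tightness, both you and the paper use the two-point distribution supported at \(\tau\) and \(1\). The paper likewise states tightness at the level of distributions on \([0,1]\), so your extra remarks on realizing it by actual instances are optional and not needed.
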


\begin{proof}
Writing \(p=\Pr[X>\tau]\), boundedness gives
\(
  \mathbb E X\le \tau(1-p)+p=\tau+(1-\tau)p.
\)
Rearranging proves the bound.  A two-point distribution supported at \(\tau\) and \(1\) attains equality.
\end{proof}

\begin{corollary}[Equivalence of decoded-mean and output classes]
\label{cor:collapse}
For maximization problems satisfying the NPO assumptions of \Cref{sec:classes}, suppose the entire preparation--decoder procedure is one uniform algorithm in \((x,\epsilon)\) with the stated class time bound and may be independently rerun.  Its decoded mean is \(\mathbb E[v(x,Y)]\) for the classical candidate \(Y\) returned by that same decoder, with \(Y\in F(x)\cup\{\bot\}\) and \(v(x,\bot)=0\).  Then decoded-mean and bounded-error classical-output versions of \(\BQAPX\), \(\BQPTAS\), and \(\BQFPTAS\) coincide up to constant or \(\epsilon\)-rescaling.  The output algorithm may use the composite decoder that repeats and selects the best valid candidate.
\end{corollary}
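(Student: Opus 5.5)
The proof splits into two directions, output-to-mean and mean-to-output, and treats each of the three regimes (\(\BQAPX\), \(\BQPTAS\), \(\BQFPTAS\)) separately. Throughout, write \(X=v(x,Y)/\OPT(x)\in[0,1]\), where failure \(\bot\) scores zero. Boundedness of \(X\) holds by the NPO assumption \(0\le v\le\OPT(x)\).

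\emph{Output to mean.} Suppose a bounded-error algorithm returns \(y\in F(x)\) with \(v(x,y)\ge\alpha\OPT(x)\) with probability at least \(2/3\). I treat the whole algorithm, including its final selection step, as the decoder, and map any invalid output to \(\bot\). Nonnegativity of scores then gives \(\mathbb E X\ge \tfrac23\alpha\). This yields a decoded-mean \((2\alpha/3)\)-approximation, which is the constant rescaling. For schemes I first amplify with \Cref{prop:conventions} to success probability \(1-\epsilon/2\) and run at accuracy \(\epsilon/2\). This gives \(\mathbb E X\ge(1-\epsilon/2)^2\ge1-\epsilon\). The cost is an \(O(\log(1/\epsilon))\) factor, which is constant for fixed \(\epsilon\) and polynomial in \(1/\epsilon\), so both the \(\PTAS\) and \(\FPTAS\) time bounds survive.

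\emph{Mean to output.} Given \(\mathbb E X\ge\mu\), I apply \Cref{lem:mean-tail} with a threshold \(\tau<\mu\). This gives \(\Pr[X>\tau]\ge(\mu-\tau)/(1-\tau)\ge\mu-\tau\). In the \(\APX\) regime, take \(\mu=\alpha\) and \(\tau=\alpha/2\). A single run then succeeds with probability at least \(\alpha/2\), a fixed constant. Run \(k=\lceil (2/\alpha)\ln 3\rceil\) independent trials, compute each score classically, discard \(\bot\), and output the best candidate. This succeeds with probability at least \(2/3\) at ratio \(\alpha/2\), without knowledge of \(\OPT\). The point is that taking the maximum of scores is monotone, so a single good trial suffices. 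In the scheme regimes, run the decoded-mean algorithm at accuracy \(\epsilon^2\), so \(\mu=1-\epsilon^2\), and take \(\tau=1-\epsilon\). Then
\[
\Pr[X>1-\epsilon]\ge\frac{\epsilon-\epsilon^2}{\epsilon}=1-\epsilon\ge\tfrac12
\]
for \(\epsilon\le1/2\). Larger \(\epsilon\) reduces to \(\epsilon=1/2\). A constant number of repetitions with best-of selection then reaches \(2/3\).

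\emph{Time bounds.} For \(\BQPTAS\), the accuracy \(\epsilon^2\) is a fixed constant whenever \(\epsilon\) is fixed, so the runtime is polynomial with an \(\epsilon\)-dependent degree. For \(\BQFPTAS\), the runtime is polynomial in \(1/\epsilon^2=(1/\epsilon)^2\) and in \(\ell(\epsilon^2)=O(\ell(\epsilon))\), so it remains polynomial jointly. Uniformity is preserved because the composite procedure computes \(\epsilon^2\) from \(\epsilon\) and reruns the given uniform algorithm. The composite procedure is exactly the repeat-and-select decoder permitted in the statement.

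\emph{Main obstacle.} I expect the delicate step to be this \(\epsilon\)-rescaling in the scheme regime rather than the tail bound itself. With the naive choice \(\tau\) close to \(\mu\), the per-run success probability \((\mu-\tau)/(1-\tau)\) can be as small as order \(\epsilon\). Recovering bounded error would then require \(\Theta(1/\epsilon)\) repetitions. That is acceptable for an \(\FPTAS\) but must still be checked against the encoding charge. Running at accuracy \(\epsilon^2\) avoids this and keeps the repetition count constant. I would also verify two details here: that the encoding length of \(\epsilon^2\) is charged correctly, and that an invalid or \(\bot\) outcome never gets selected, which score verification guarantees.
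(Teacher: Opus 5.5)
Your proof is correct and follows the paper's argument. The output-to-mean direction matches it: mean at least $2\alpha/3$ for a fixed ratio, and for schemes accuracy $\epsilon/2$ with failure probability amplified to $\epsilon/2$, giving $(1-\epsilon/2)^2\ge1-\epsilon$. The constant-ratio mean-to-output step with $\tau=\alpha/2$ also matches.

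The only difference is the rescaling in the mean-to-output scheme case. The paper runs the decoded-mean scheme at $\epsilon/2$ rather than $\epsilon^2$. With $\tau=1-\epsilon$, this already gives per-run success probability $(\epsilon/2)/\epsilon=1/2$ for every $\epsilon\in(0,1)$, so the order-$\epsilon$ obstacle you anticipate does not arise. Your squaring is valid, but it needlessly squares the $1/\epsilon$ dependence of an $\FPTAS$ bound and forces the case split at $\epsilon=1/2$.
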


\begin{proof}
For the mean-to-output direction, apply \Cref{lem:mean-tail}.  A decoded-mean \(\PTAS\) run at \(\epsilon/2\) has \(\mu\ge1-\epsilon/2\).  Taking \(\tau=1-\epsilon\) gives success probability at least \(1/2\), and two independent runs raise it to at least \(3/4\).  The same rescaling preserves \(\FPTAS\) time.  At constant ratio, mean \(\alpha\) gives output ratio \(\alpha/2\) with probability at least \(\alpha/(2-\alpha)\), followed by constant amplification.

Conversely, run a bounded-error output scheme at parameter \(\epsilon/2\) independently
\(
 k=\lceil\log_3(2/\epsilon)\rceil
\)
times, verify and score its candidates, and retain the best valid one.  If every run is invalid, output \(\bot\).  The all-failure probability is at most \(\epsilon/2\), so the expected normalized score is at least
\[
  (1-\epsilon/2)^2\ge1-\epsilon.
\]
For \(\APX\), a bounded-error \(\alpha\)-approximation has decoded mean at least \(2\alpha/3\), and constant repetition can make the success contribution arbitrarily close to \(\alpha\).
\end{proof}

The companion paper uses the forward mean-to-output implication for MaxCut~\cite[Prop.~1]{HadfieldApproxQOpt2026}.  The converse above completes the equivalence for the three uniform classes.  Section~\ref{sec:scope} applies the same score and resource accounting to recent DQI output results.

The corollary concerns the expectation of the \emph{decoder-induced} score observable \(G\).  A high expectation for a different surrogate Hamiltonian \(H\) requires a separate argument relating \(H\) to \(G\) before the corollary applies (Table~\ref{tab:interfaces}).

The transfer above is for maximization.  For an always-feasible minimization decoder with \(v(Y)\ge\OPT>0\), an expected cost at most \((1+\epsilon')\OPT\) gives
\(\Pr[v(Y)>(1+\epsilon)\OPT]\le\epsilon'/\epsilon\)
by Markov's inequality applied to the nonnegative excess cost.  Taking \(\epsilon'=\epsilon/3\) gives success probability at least \(2/3\).  The convention that assigns score zero on failure does not apply to minimization.

\noindent\emph{Example. QAOA on 3-regular MaxCut.}\enspace
Computational-basis measurement of ordinary MaxCut QAOA returns a cut with score \(C(Y)\).  Let \(H_C\) be the diagonal cost Hamiltonian whose computational-basis eigenvalues are these cut scores.  Its score observable is therefore \(G=H_C\).  Farhi, Goldstone, and Gutmann's analysis of unweighted 3-regular graphs gives a one-layer (\(p=1\)) choice with expected cut value at least \(0.6924\,\OPT\), using efficient fixed-depth preparation and preprocessing~\cite[Sec.~V, Eq.~(35)]{Farhi2014}.  Taking \(\tau=2/3\) in \Cref{lem:mean-tail} yields
\[
  \Pr\!\left[C(Y)>(2/3)\OPT\right]
  \ge 3\bigl(0.6924-2/3\bigr)=0.0772.
\]
The best of 14 independent measured cuts therefore exceeds \((2/3)\OPT\) with probability at least \(1-(1-0.0772)^{14}>2/3\), giving \(\BQAPX\) membership for this graph family.  Thirteen runs would not reach success probability \(2/3\) from this mean-only bound.  Fixed-depth QAOA on bounded-degree graphs also has cut-score variance \(O(|E|)\) by locality~\cite[Sec.~III, Eq.~(30)]{Farhi2014}.  Since \(\OPT\ge |E|/2\), Chebyshev's inequality makes the one-shot failure probability at the same \(2/3\) threshold \(O(1/|E|)\) for \(p=1\).  Thus one run suffices on sufficiently large graphs; the 14-run bound holds uniformly at every size.  The general Goemans--Williamson algorithm gives a stronger classical guarantee~\cite{GoemansWilliamson1995}, and a semidefinite-programming-based algorithm guarantees at least \(0.9326\) on unweighted graphs of maximum degree three~\cite{HalperinLivnatZwick2004}.  Full MaxCut belongs to \(\BQAPX\) through classical approximation, while this QAOA calculation concerns the 3-regular family.

\begin{table}[t]
\centering
\footnotesize
\setlength{\tabcolsep}{4pt}
\begin{tabular}{@{}p{0.27\linewidth}p{0.31\linewidth}p{0.34\linewidth}@{}}
\toprule
\raggedright Reported result\par & \raggedright Guarantee\par & \raggedright Remaining step\par \tabularnewline
\midrule
\raggedright Value estimate\par & \raggedright $\widehat V\approx\OPT$\par & \raggedright Find a feasible solution and prove its ratio\par \tabularnewline
\raggedright Surrogate energy\par & \raggedright High $\Tr(H\rho)$\par & \raggedright Relate $H$ to decoded score $G$\par \tabularnewline
\raggedright Decoded mean\par & \raggedright High $\mathbb E[v(x,Y)]$\par & \raggedright Bound scores and amplify success\par \tabularnewline
\raggedright Classical output\par & \raggedright Feasible $Y$ meets the ratio with probability $\ge 2/3$\par & \raggedright Count total running time\par \tabularnewline
\bottomrule
\end{tabular}
\caption{Four interfaces for maximization. Here $\widehat V$ estimates $\OPT$, $H$ is a surrogate Hamiltonian, $\rho$ a prepared state, and $Y$ a decoded candidate. The last row gives the target guarantee, subject to the full runtime bound.}
\label{tab:interfaces}
\end{table}

Fixed-angle guarantees at other depths can be converted in the same way.  Wurtz and Love proved that fixed-angle \(p=2\) QAOA has mean ratio at least \(0.7559\) on every unweighted 3-regular MaxCut graph~\cite{WurtzLove2021}.  This bound likewise gives a bounded-error output guarantee.  Appendix~\ref{app:qaoa-repetitions} gives the repetition count for any uniformly implementable circuit with a stated worst-case mean ratio.  Higher-depth analyses on large-girth graphs (graphs without short cycles) give further mean bounds for those graph families~\cite{BassoFarhiMarwahaVillalongaZhou2022,FarhiGutmannRanardVillalonga2025}.  Farhi et al. report an expected cut fraction of approximately \(0.8971\) at fixed \(p=17\) on unweighted 3-regular graphs of girth at least \(36\), using angles precomputed for this family~\cite[Sec.~5.1, Table~1]{FarhiGutmannRanardVillalonga2025}.  Since \(\OPT\le |E|\), a cut-fraction bound also lower-bounds the expected approximation ratio, but this result does not improve the \(0.9326\) classical ratio guarantee quoted above.

\section{Search padding and Certified Maximum Order}
\label{sec:arithmetic}

This section develops conditional gaps between quantum and randomized classical approximation.  \Cref{prop:search-padding} turns a classically checkable search relation into an NPO problem with a one-half approximation threshold.  Certified Maximum Order has graded element-order scores, so even a polynomial-factor randomized approximation would yield a factoring algorithm (\Cref{thm:cmo}).

\subsection{The generic search threshold}

\begin{proposition}[Search padding]
\label{prop:search-padding}
Let \(I\) be a polynomial-time recognizable set and \(R(x,y)\) a polynomial-time decidable, polynomially balanced relation with at least one witness \(y\) for every \(x\in I\).  There is an NPO maximization problem \(\Pi_R\) with a deterministic \(1/2\)-approximation such that
\[
  \Pi_R\in\RPTAS\quad\Longleftrightarrow\quad R\in\FBPP.
\]
Indeed, a randomized polynomial-time \(\alpha\)-approximation for any fixed \(\alpha>1/2\) gives a solver for \(R\), while a randomized solver for \(R\) gives an exact optimizer for \(\Pi_R\).  A bounded-error quantum polynomial-time solver for \(R\) likewise gives an exact quantum optimizer.  Thus, if \(R\) has a quantum solver but no randomized solver on \(I\), then \(\Pi_R\in\RAPX\cap\BQFPTAS\setminus\RPTAS\).
\end{proposition}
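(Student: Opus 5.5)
I would construct \(\Pi_R\) explicitly and then check each claimed implication in turn. Instances are \(x\in I\). Candidate strings are either a fixed dummy symbol \(\bot_0\) or strings \(y\) within the polynomial length bound of \(R\). Feasibility is \(F(x)=\{\bot_0\}\cup\{y: R(x,y)\}\), which is nonempty and decidable in polynomial time because \(R\) is. The objective is \(v(x,\bot_0)=1\) and \(v(x,y)=2\) for a witness \(y\). Since every \(x\in I\) has a witness, \(\OPT(x)=2>0\), so \(\Pi_R\) satisfies the NPO assumptions of \Cref{sec:classes}. Outputting \(\bot_0\) is a deterministic \(1/2\)-approximation, which gives \(\Pi_R\in\APX\subseteq\RAPX\).

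Next I would prove the two reductions. Suppose a randomized algorithm \(\mathcal A\) is an \(\alpha\)-approximation with fixed \(\alpha>1/2\). On input \(x\in I\), with probability at least \(2/3\) it returns a feasible \(y\) with \(v(x,y)\ge 2\alpha>1\). The only feasible score exceeding \(1\) is \(2\), so \(y\) is a witness for \(R\). Running \(\mathcal A\) and outputting its answer therefore solves \(R\) on \(I\), so \(R\in\FBPP\). If the run returns \(\bot_0\), the solver may output anything, or it can verify and rerun.

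Conversely, given a randomized or bounded-error quantum solver for \(R\), I would run it, check \(R(x,y)\) in polynomial time, and output \(y\) if the check passes and \(\bot_0\) otherwise. The output is always feasible, and with probability at least \(2/3\) it has score \(\OPT\). This exact optimizer ignores \(\epsilon\), so it places \(\Pi_R\) in \(\RFPTAS\) or \(\BQFPTAS\) respectively, as the accuracy-encoding paragraph notes.

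The equivalence \(\Pi_R\in\RPTAS\Leftrightarrow R\in\FBPP\) then follows. For the forward direction, fix \(\epsilon=1/4\) in the uniform scheme. This yields a randomized polynomial-time \(3/4\)-approximation, and the first reduction applies. For the reverse direction, the exact optimizer lies in \(\RFPTAS\subseteq\RPTAS\). The final separation combines these facts: a quantum solver gives \(\BQFPTAS\), the dummy gives \(\RAPX\), and a randomized solver's absence excludes \(\RPTAS\).

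The main point needing care is the definitional one. The \(\FBPP\) convention concerns only inputs in \(I\), which matches the instance set of \(\Pi_R\). The scheme's \(2/3\) success probability must transfer directly to the witness-finding success probability. Both hold because feasibility and score are classically checkable, so no step presents a genuine obstacle.
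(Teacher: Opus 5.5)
Your proposal is correct and follows the paper's proof: a dummy of score one and witnesses of score two, verification with fallback to the dummy to turn a randomized or quantum solver into an exact optimizer, and a call to the scheme at \(\epsilon=1/4\) to extract a witness. The only difference is cosmetic. The paper tags witnesses as \((\mathsf{w},y)\) so the dummy can never coincide with a witness string, which is what your choice of a distinct symbol \(\bot_0\) implicitly assumes.
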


\begin{proof}
For input \(x\in I\), allow a dummy solution \(\mathsf{d}\) of score one and tagged witnesses \((\mathsf{w},y)\) of score two precisely when \(R(x,y)\) holds.  The optimum is two and the dummy gives ratio \(1/2\).  Verify the output of a quantum or randomized search algorithm and fall back to \(\mathsf{d}\) on failure.  Either search algorithm is an exact optimizer on its successful event and ignores \(\epsilon\).  Conversely, an \(\alpha>1/2\) approximation output must be a valid tagged witness on its successful event.  A randomized \(\PTAS\) called at \(\epsilon=1/4\) therefore solves \(R\) with probability at least \(2/3\).
\end{proof}

\Needspace{7\baselineskip}
\DLogFit{} adapts the cryptographic learning-to-distinguishing idea of Kearns and Valiant and later discrete-logarithm learning work to an explicit finite-sample optimization problem~\cite{KearnsValiant1994,LiuArunachalamTemme2021}.  Its hypotheses, feasibility, and scores are efficiently checkable on the input list.  Appendix~\ref{app:keyfit} defines the problem and proves \Cref{thm:keyfit} by a direct reduction.  A deterministic algorithm attains ratio \(1/2\), a quantum algorithm finds an optimum, and any fixed randomized classical ratio above \(1/2\) would solve safe-prime discrete logarithm.

\subsection{Certified Maximum Order}
\label{sec:cmo}

The ordinary period-finding problem fixes a modular multiplier and asks for its order.  We instead optimize over the multiplier.  A reported order divides the maximum possible order, leaving only polynomially many missing multipliers for a polynomial-factor approximation.  The certificate makes that reported order and its score efficiently checkable.

For any positive integer \(m\), let \(U_m=(\mathbb Z/m\mathbb Z)^\times\) be the multiplicative group of residues coprime to \(m\).  For \(N\ge3\), the multiplicative order \(\ord_N(a)\) is the least positive \(r\) with \(a^r\equiv1\pmod N\).  The Carmichael function \(\lambda(N)\) is the least positive exponent that works for \emph{every} \(a\in U_N\), equivalently the least common multiple of their orders.  Since \(U_N\) is finite abelian, some element has order \(\lambda(N)\)~\cite[Sec.~II and Theorem~6]{GarciaEscartin2021}.

\begin{definition}[Certified Maximum Order problem]
An instance of \(\CMO\) is an integer \(N\ge3\) in binary.  A feasible solution is a tuple
\[
  (a,r,\{(q_j,e_j)\}_{j=1}^k)
\]
such that \(1\le a<N\), \(\gcd(a,N)=1\), \(1\le r\le N\), the listed distinct primes and positive exponents satisfy \(r=\prod_jq_j^{e_j}\), and
\[
\begin{gathered}
  a^r\equiv1\pmod N,\\
  a^{r/q_j}\not\equiv1\pmod N\quad\text{for every }j.
\end{gathered}
\]
The objective value is \(r\).
\end{definition}

The list is empty when \(r=1\), and \((1,1,\varnothing)\) is a feasible fallback.  Since \(2^k\le\prod_{j=1}^kq_j^{e_j}=r\le N\), one has \(k\le\log_2N\).  Each \(q_j\le N\) and \(e_j\le\log_2N\) as well, so solutions are polynomially balanced.  The verifier first rejects malformed encodings and violations of these bounds.  Only then does it use exact-integer multiplication, primality testing, gcds, and modular exponentiation to verify feasibility in polynomial time~\cite{AKS2004}.  These conditions certify \(r=\ord_N(a)\).  If the true order were a proper divisor of \(r\), deleting some prime factor from \(r\) would still give an exponent.  Checking only \(a^r\equiv1\pmod N\) would instead admit \((a,r)=(1,N)\) on every input and trivialize the objective.  Consequently,
\[
  \OPT_{\CMO}(N)=\lambda(N).
\]
For \(n\)-bit \(N\), the fallback has ratio \(1/\lambda(N)>2^{-n}\).  Thus \(\CMO\in\REXPAPX\) unconditionally, while its finer classical placement below depends on factoring.

Garc{\'i}a-Escart{\'i}n gave a bounded-error quantum algorithm for the exact value \(\lambda(N)\), taking the least common multiple of orders of sampled units without first factoring \(N\)~\cite{GarciaEscartin2021}.  This addresses the optimum-value interface in \Cref{tab:interfaces}.  For \(\CMO\), a classical verifier also needs an element attaining the reported order and a prime factorization of that order.  Our construction in Appendix~\ref{app:cmo-proofs} factors \(N\) and the component group orders, builds a maximum-order element by the Chinese remainder theorem, and accounts for worst-case bounded error.  The converse applies even to polynomial-factor classical approximations.

\Needspace{8\baselineskip}
Let \(\Factor\) denote the function problem of returning a nontrivial factor of a composite integer.

\begin{lemma}[Factoring from an exponent]
\label{lem:exponent-factor}
Given an odd integer \(N\) with at least two distinct prime factors and a positive multiple \(M\) of \(\lambda(N)\), a randomized polynomial-time algorithm returns a nontrivial factor of \(N\) with constant success probability~\cite{Miller1976,Shor1997}.
\end{lemma}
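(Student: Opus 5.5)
I will use the standard Miller-style square-root-of-unity argument. Write \(M=2^s t\) with \(t\) odd; since \(N\) is odd, \(\lambda(N)\) is even, so \(s\ge1\). The algorithm is as follows.

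\textbf{Step 1.} Choose \(a\in\{2,\dots,N-2\}\) uniformly. If \(\gcd(a,N)>1\), output that gcd; it is a nontrivial factor because \(a<N\).

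\textbf{Step 2.} Otherwise compute \(b_0=a^t\bmod N\) and \(b_i=b_{i-1}^2\bmod N\) for \(i\le s\). Because \(\lambda(N)\mid M\), we have \(b_s\equiv1\). Let \(i\) be the least index with \(b_i\equiv1\). If \(i\ge1\) and \(b_{i-1}\not\equiv-1\pmod N\), then \(b_{i-1}\) is a nontrivial square root of \(1\). Output \(\gcd(b_{i-1}-1,N)\): since \(N\mid(b_{i-1}-1)(b_{i-1}+1)\) but \(N\) divides neither factor, this gcd is nontrivial. Every step is modular exponentiation or a gcd, so polynomial time, with \(\log M\) counted as part of the input length.

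\textbf{Step 3 (main obstacle): a uniform unit succeeds with probability at least \(1/2\).} Call \(a\in U_N\) bad if \(a^t\equiv1\) or \(a^{2^j t}\equiv-1\) for some \(j<s\). Let \(j^*\) be the largest \(j\) for which some unit \(c\) satisfies \(c^{2^j t}\equiv-1\). Such \(j\) exist, since \(j=0\) works with \(c=-1\) because \(t\) is odd. Also \(j^*<s\), because \(c^{2^s t}\equiv1\not\equiv-1\).

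Set \(H=\{a\in U_N: a^{2^{j^*}t}\equiv\pm1\}\). This is a subgroup, and it contains every bad element. Indeed, if \(a^{2^j t}\equiv-1\) then \(j\le j^*\), and repeated squaring gives \(a^{2^{j^*}t}\equiv\pm1\); if \(a^t\equiv 1\) this is immediate.

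To show \(H\) is proper, factor \(N=p^e m\) with \(p\nmid m\) and \(m>1\); this is where two distinct prime factors are used. Take \(c\) with \(c^{2^{j^*}t}\equiv-1\). By the Chinese remainder theorem, pick \(a\) with \(a\equiv c\pmod{p^e}\) and \(a\equiv1\pmod m\). Then \(a^{2^{j^*}t}\equiv-1\pmod{p^e}\) and \(\equiv1\pmod m\), so \(a\notin H\). Hence \(|H|\le|U_N|/2\).

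\textbf{Step 4: assemble the success probability.} Conditioned on \(a\) being a unit, \(a\) is uniform on \(U_N\setminus\{1,-1\}\), and both \(\pm1\) lie in \(H\). So the probability of a good \(a\) is at least
\[
\frac{|U_N|-|H|}{|U_N|-2}\ge\frac12 .
\]
A non-unit \(a\) succeeds directly in Step 1. The overall success probability is therefore at least \(1/2\), and constant repetition with checking of each candidate factor amplifies it as required.

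The only delicate points are confirming \(j^*<s\) and that the bad set lies inside one proper subgroup; everything else is routine.
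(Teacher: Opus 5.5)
Your proof is correct, but it takes a different route from the paper's. You use the classical Miller--Rabin subgroup argument. Every failing unit lies in \(H=\{a\in U_N: a^{2^{j^*}t}\equiv\pm1\}\), and a single Chinese-remainder split \(N=p^e m\) exhibits a unit outside \(H\). Lagrange's theorem then bounds the failure probability by \(1/2\).

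The paper instead gives a Shor-style component analysis. It uses that each \(U_{p_i^{e_i}}\) has a cyclic Sylow-2 subgroup of order \(2^{s_i}\), and shows that \(b^u\) is uniform on it and independent across components. It computes the exact distribution of the 2-adic orders \(L_i=\log_2\ord_{p_i^{e_i}}(b^u)\) and bounds \(\Pr[L_1=L_2]\le 1/2\). Finally, it observes that unequal \(L_i\) force a nontrivial square root of one in the squaring sequence.

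Your argument is more elementary. It needs neither cyclicity of \(U_{p^e}\) nor any distribution computation, only Lagrange and one CRT split. The paper's version identifies the failure event exactly (all component 2-adic orders coincide), which ties it to Shor's order-finding analysis. It also gives sharper estimates when \(N\) has many prime factors.

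One small point concerns how the lemma is used. In the proof of \Cref{thm:cmo}, the routine is run on trial exponents \(M=kr\) that are generally not multiples of \(\lambda(N)\). Your Step~2 should therefore report failure when no index \(i\) with \(b_i\equiv1\) exists. Your soundness argument, that any output gcd is a nontrivial factor, already holds for arbitrary \(M\), so nothing else changes.
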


\noindent See Appendix~\ref{app:cmo-proofs} for the uniform component analysis and proof.

\begin{theorem}[Certified Maximum Order versus factoring]
\label{thm:cmo}
\(\CMO\in\BQFPTAS\).  In fact, a bounded-error quantum polynomial-time algorithm returns an exact optimum.  Conversely, any randomized polynomial-time approximation for \(\CMO\) with factor at most \(Cn^c\), where \(n\) is the binary input length and \(C\ge1\) and \(c\ge0\) are fixed constants, yields a randomized polynomial-time factoring algorithm.  Hence \(\CMO\in\RPOLYAPX\) if and only if \(\Factor\in\FBPP\).  If \(\Factor\notin\FBPP\), then
\[
  \CMO\in\BQFPTAS\setminus\RPOLYAPX,
\]
so \(\RAPX\subsetneq\BQAPX\), \(\RPTAS\subsetneq\BQPTAS\), and \(\RFPTAS\subsetneq\BQFPTAS\).  The matched logarithmic- and polynomial-factor quantum classes are likewise strictly larger than their randomized classical counterparts.
\end{theorem}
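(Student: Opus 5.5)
The proof splits into the quantum upper bound and the classical converse, followed by bookkeeping for the class separations.

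\emph{Quantum exact optimizer.} On input \(N\), we first run Shor's factoring algorithm recursively. We also use classical perfect-power detection and primality testing, to obtain the full factorization \(N=\prod_i p_i^{f_i}\) with bounded error. Each failure is detected by verification and triggers a rerun, with polynomially many repetitions budgeted in total. The orders \(|U_{p_i^{f_i}}|=p_i^{f_i-1}(p_i-1)\) are then factored the same way, as is \(2^{f-2}\) in the \(p=2\) case. From these factorizations we compute \(\lambda(N)=\operatorname{lcm}_i\lambda(p_i^{f_i})\) together with its prime factorization. For odd \(p_i\), we find a generator \(g_i\) of the cyclic group \(U_{p_i^{f_i}}\) by random sampling. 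A random element is a generator with probability \(\varphi(\varphi(p^f))/\varphi(p^f)=\Omega(1/\log\log p)\), and this is checkable by testing \(g^{\varphi/q}\ne1\) for each prime \(q\) dividing the group order. For \(2^f\) we take \(5\) (order \(2^{f-2}\)), or \(3\) when \(f=2\). The CRT combination \(a\equiv g_i\pmod{p_i^{f_i}}\) has order \(\lambda(N)\). We output \((a,\lambda(N),\text{factorization})\) and run the verifier, falling back to \((1,1,\varnothing)\) if it rejects. All stages are verified and amplified by \Cref{prop:conventions}, so the overall success probability is at least \(2/3\). The algorithm returns an exact optimum and ignores \(\epsilon\), so \(\CMO\in\BQFPTAS\).

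\emph{Classical converse.} Suppose \(\mathcal A\) returns \((a,r,\cdot)\) with \(r\ge\lambda(N)/(Cn^c)\) with probability \(2/3\). The verified certificate guarantees \(r=\ord_N(a)\mid\lambda(N)\), so the missing cofactor \(\lambda(N)/r\) is at most \(Cn^c\). The plan is to set
\[
M=r\cdot\operatorname{lcm}(1,2,\dots,\lfloor Cn^c\rfloor),
\]
which has polynomially many bits, about \(Cn^c\log\) bits by the prime number theorem. Any integer \(\le Cn^c\) divides \(\operatorname{lcm}(1,\dots,\lfloor Cn^c\rfloor)\), so \(M\) is a multiple of \(\lambda(N)\). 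For odd \(N\) with two distinct prime factors, \Cref{lem:exponent-factor} then yields a nontrivial factor. The remaining cases are handled before calling \(\mathcal A\): even \(N\) (output \(2\)), perfect powers (detected by integer root extraction), and primes (detected by AKS). Those cases cover all composite inputs, so the success probability is a constant times \(2/3\), and repetition with verification amplifies it. This gives \(\Factor\in\FBPP\). Conversely, if \(\Factor\in\FBPP\), the quantum algorithm above becomes a classical one. Every step other than factoring is classical randomized, namely generator sampling, CRT, and verification. Hence \(\CMO\) is exactly solvable in randomized polynomial time, which in particular places it in \(\RPOLYAPX\).

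\emph{Separations.} If \(\Factor\notin\FBPP\), then \(\CMO\notin\RPOLYAPX\), and hence \(\CMO\) is in none of \(\RAPX\), \(\RPTAS\), \(\RFPTAS\), or \(\RLOGAPX\). Together with \(\CMO\in\BQFPTAS\), which is contained in every larger quantum class, this witnesses all the stated strict containments. I expect the main obstacle to be the converse step, in two respects. First, we must check that the polynomial slack in the ratio is absorbed into \(M\) with only polynomial bit length. Second, we must confirm that \Cref{lem:exponent-factor} accepts an arbitrary multiple \(M\) of \(\lambda(N)\) rather than needing \(\lambda(N)\) itself. This holds by the Miller-style argument: write \(M=2^s t\) with \(t\) odd, and take random \(b\) with \(\gcd(b^{2^jt}-1,N)\). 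The secondary technical point is the uniform bounded-error accounting across the nested quantum factoring calls.
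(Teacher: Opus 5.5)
Your proposal is correct and follows the paper's proof. The quantum side is the same: Shor factoring of \(N\) and of each \(p_i-1\), verified generator sampling, the \(2\)-power components, the CRT combination and the fallback \((1,1,\varnothing)\). The final bookkeeping for the strict containments also matches.

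The only deviation is in the converse. You absorb the unknown cofactor \(k_0=\lambda(N)/r\le Cn^c\) into the single exponent \(M=r\cdot\operatorname{lcm}(1,\dots,\lfloor Cn^c\rfloor)\), which has \(O(Cn^c)\) bits, and invoke \Cref{lem:exponent-factor} once. The paper instead runs the verified gcd-and-squaring routine on each \(M=kr\) with \(k\le Cn^c\). Both versions cost \(Cn^c\cdot\operatorname{poly}(n)\) and rely on that routine returning only verified factors when \(M\) is not a multiple of \(\lambda(N)\).
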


\noindent The full proof is in Appendix~\ref{app:cmo-proofs}.  A certified approximate order \(r\) divides \(\lambda(N)\).  For a \(Cn^c\)-factor guarantee, the missing multiplier \(\lambda(N)/r\) is an integer at most \(Cn^c\).  The reduction tries every multiplier up to this bound before applying \Cref{lem:exponent-factor}.  Conversely, if factoring is in \(\FBPP\), the constructive upper-bound algorithm in Appendix~\ref{app:cmo-proofs} uses randomized classical factoring and generator sampling to find an exact optimum in polynomial time.  The quantum construction and full success accounting are given there as well.  The growing-factor classes are defined in \Cref{def:growing-factors}.

For example, \(\lambda(35)=12\), and \(a=2\) has order \(12\).  Thus \((2,12,\{(2,2),(3,1)\})\) is an optimal certificate.  Given a certified order \(r=6\) from a one-half approximation, the reduction need only test multipliers \(k\in\{1,2\}\).  The second reaches exponent \(12\), and the square-root step reveals a factor.

More generally, suppose a polynomial-time classical algorithm has a factor guarantee \(\rho(n)\le B(n)\), where \(B(n)\) is an efficiently computable integer with \(\log B(n)=\operatorname{poly}(n)\).  Enumerating \(k\le B(n)\) in the same reduction gives a randomized factoring algorithm with time \(B(n)\operatorname{poly}(n+\log B(n))\).  For instance, a \(2^{\lceil n^{1/4}\rceil}\)-factor polynomial-time approximation would imply a \(2^{O(n^{1/4})}\)-time randomized factoring algorithm.  This is a quantitative consequence of the reduction, not an exclusion under the assumption \(\Factor\notin\FBPP\).

\begin{remark}[Exact-search origin]
\(\CMO\) and the appendix example \(\DLogFit\) both have bounded-error quantum polynomial-time algorithms that return an optimum.  Their cross-model separations arise from exact quantum search.  The polynomial-factor converse for \(\CMO\) uses its graded order objective and does not follow from search padding.
\end{remark}

\section{Interpreting same-problem separations}
\label{sec:separation-implications}

A quantum--classical comparison must fix the instance family, feasible outputs, score, and success criterion.  The Max-E3SAT and Planar Vertex Cover witnesses of \Cref{prop:hierarchy} establish strict quantum-class inclusions, while their approximation algorithms are already deterministic.

For \(\CMO\), a quantum algorithm finds an optimum exactly, while the factoring assumption excludes even polynomial-factor randomized approximation.  \(\DLogFit\) also has an exact quantum optimizer but admits a deterministic one-half approximation; its classical barrier excludes a \(\RPTAS\).  A mixed placement such as \(\BQPTAS\setminus\RFPTAS\) says less on its own.  If the problem is also in \(\RPTAS\), both models achieve every fixed accuracy, and \(\BQPTAS\) membership alone does not establish a better quantum dependence on \(\epsilon\).

A \(\BQAPX\setminus\RAPX\) witness would require randomized classical hardness against every fixed nonzero ratio, so it cannot be MaxCut or MaxSAT, which already lie in \(\APX\).  For either familiar problem, the question is instead whether a uniform quantum algorithm can guarantee a ratio \(\alpha_{\mathrm{qu}}\) above a classical hardness threshold \(\alpha_{\mathrm{cl}}\).  The problem could remain in both \(\RAPX\) and \(\BQAPX\); the separation would concern achievable ratios rather than class membership.

\section{Quantum output guarantees and their access models}
\label{sec:scope}

\subsection{Decoded quantum interferometry}

Decoded quantum interferometry (DQI) returns classically scoreable fits for Optimal Polynomial Intersection (OPI).  Jordan et al. obtain a higher expected satisfied fraction than known polynomial-time classical methods in their regime~\cite{Jordan2025}.  Subsequent analyses address circuit costs, worst-case bounds, and variants of the oracle access~\cite{Khattar2025,Rosmanis2026,SunWootters2026,HorinagaYamakawa2026,Jo2026}.

For an explicitly represented OPI family \(\mathcal F\) with efficient recognition and scoring, a uniform polynomial-time quantum algorithm returning fits with decoded mean at least \(\mu\OPT(x)\) on every \(x\in\mathcal F\) gives a bounded-error \(\tau\)-approximation for every fixed \(\tau<\mu\) by \Cref{lem:mean-tail} and independent repetition.  A ratio gap in this time model also needs randomized classical hardness at the same \(\tau\), beyond comparison with the best known scalar-OPI algorithm.  An \(\NP\)-hard gap for general max-LINSAT does not transfer to structured OPI~\cite{KramerSchubertEisert2026}.

For MaxCut, Parekh showed that the graphs behind an earlier asymptotic DQI guarantee are classically solvable in polynomial time~\cite{ParekhDQIMaxCut2025}.  More recently, Apte et al. obtained nontrivial DQI guarantees with imperfect decoding on specified regular MaxCut and Max-\(k\)-Cut families, while ruling out quantum advantage in the regimes they analyze~\cite{ApteCycleCodes2026}.  Table~\ref{tab:prior} distinguishes these results by output, classical comparator, and input access.

\begin{table*}[t]
\centering\footnotesize\setlength{\tabcolsep}{3.5pt}
\begin{tabular}{@{}p{0.19\linewidth}p{0.27\linewidth}p{0.23\linewidth}p{0.23\linewidth}@{}}
\toprule
Work & Instance and output & Classical comparator & Approximation statement \\
\midrule
Yamakami~\cite{Yamakami2002} & Quantum-index acceptance value & No randomized NPO comparator & Different output object \\
Pirnay et al.~\cite{Pirnay2024} & Classical solutions on constructed formula-coloring and integer-programming instance families & RSA-inversion-based classical hardness & Polynomial-factor classical barriers and efficient quantum approximation on those families \\
Szegedy~\cite{Szegedy2022}; Remmerswaal~\cite{Remmerswaal2026} & Classical solutions on selected NPO instances; paired formula--factoring gap-3SAT assignments & Efficient classical algorithms & PCP-based outline, then detailed fixed-ratio construction \\
Hadfield~\cite{HadfieldApproxQOpt2026} & Decoded MaxCut cuts from uniform state preparation & NP-hard MaxCut-Gain gap & Fixed positive all-input gain would imply \(\NP\subseteq\BQP\) \\
Jordan et al.~\cite{Jordan2025} (DQI on OPI) & Classical polynomial fits on a specified problem family & Best known polynomial-time classical algorithms & Better expected satisfied fraction in a stated regime (no proved class exclusion) \\
Kramer et al.~\cite{KramerGilFusterJonesEisertSchreiber2026} & Classical fits for folded OPI with membership oracles & Sampled-oracle classical query lower bound & Every-oracle DQI output above any fixed \(\tau\in(\alpha_{\mathrm C}(R),\alpha_{\mathrm Q}(R))\) \\
Wu and Yang~\cite{WuYang2026} & Polynomial fits for streamed OPI and Hermite OPI & Unconditional multipass classical space lower bound & Polynomial space gap for OPI; exponential for Hermite OPI \\
Here & Classical solutions to \(\CMO\), with \(\DLogFit\) in Appendix~\ref{app:keyfit} & \(\RPOLYAPX,\RPTAS\) & Polynomial-factor CMO barrier and appendix \(1/2\) threshold \\
\bottomrule
\end{tabular}
\caption{Classical-output results distinguished by instance family, access model, comparator, and resource bound.  In the folded-OPI row, \(\alpha_{\mathrm C}(R)\) and \(\alpha_{\mathrm Q}(R)\) denote its classical and DQI score thresholds at code rate \(R\), as specified below.  The oracle and streaming rows do not assert separation of the explicit-input time classes defined here.}
\label{tab:prior}
\end{table*}

\Needspace{7\baselineskip}
In recent work, Kramer et al. prove a membership-query output gap for balanced folded OPI~\cite[Eq.~(1.1) and Theorem~1]{KramerGilFusterJonesEisertSchreiber2026}.  Let \(R=\lim_{M\to\infty}K/(hM)\) be the folded Reed--Solomon rate, with \(K\) message coordinates and \(M\) blocks of \(h\) scalar evaluations.  Following their notation, the sampled-oracle classical threshold is \(\alpha_{\mathrm C}(R)=(1+R)/2\), while DQI's decoded mean approaches \(\alpha_{\mathrm Q}(R)=(1+\sqrt{R(2-R)})/2\) on every promised oracle.  Repetition gives verified outputs above each fixed \(\tau\in(\alpha_{\mathrm C}(R),\alpha_{\mathrm Q}(R))\) by \Cref{eq:mean-tail}; exceeding \(\alpha_{\mathrm C}(R)\) by a fixed amount with constant probability over sampled oracles requires super-polynomially many classical membership queries, even with unlimited between-query computation.  Since \(\OPT\le1\), these score thresholds also give approximation ratios.  A modified coherent procedure reaches any fixed score threshold below about \(0.958\) at \(R=0.3\) on typical sampled oracles~\cite[Theorem~2]{KramerGilFusterJonesEisertSchreiber2026}.  Two other recent papers establish streaming-space gaps: Wu and Yang give unconditional polynomial-space gaps for OPI and exponential-space gaps for Hermite OPI against multipass classical algorithms~\cite[Theorems~1.1 and 1.2]{WuYang2026}, while Bouland et al. give a one-pass quantum OPI algorithm for every stream order with a classical lower bound on random-order sampled inputs~\cite[Theorems~27 and 29]{BoulandDingJain2026}.  Their space bounds concern a different resource from the explicit-input time classes defined here.

\subsection{Quantum optimizers and encoded readout}

The output classes apply to more than QAOA.  Finite-time annealing and adiabatic optimization~\cite{KadowakiNishimori1998,FarhiAdiabatic2000}, dissipative and measurement-driven state preparation~\cite{CubittDQE2023,StollenwerkHadfield2025}, and recursive hybrids such as iterative quantum optimization and quantum-informed recursive optimization~\cite{BravyiKlieschKoenigTang2020,Dupont2023Greedy,Brady2024,FinzgarQIRO2024} are candidates for these classes.  Their final measured, decoded, or repaired outputs must have a proved ratio, with control choice and feedback included in the time bound~\cite[Sec.~IV]{HadfieldApproxQOpt2026}.

QAOA and constraint-preserving alternating operator ansatz variants return classical candidates~\cite{Abbas2024,Farhi2014,Hadfield2019,MarshWang2019}.  For ordinary MaxCut QAOA, computational-basis measurement has the cost Hamiltonian as its score observable.  A uniform worst-case mean bound therefore gives a bounded-error output ratio by \Cref{cor:collapse}, as illustrated in \Cref{sec:decoder}.  For constrained variants, the decoder must return a feasible candidate or declare failure; efficient repair is one way to do so.  Onah and Michielsen analyze high-probability exact output under inverse-polynomial ideal weight on optima, independent sampling, and efficient repair and scoring~\cite{OnahMichielsen2026}.

Nonuniform controls or advice mark a different resource model.  A circuit or parameter schedule without an efficient generator, or a warm-start solution supplied separately for each instance, may show attainable quality but does not establish membership in these classes.  A supplied cut can already encode a good answer, while finding effective variational parameters can itself be hard in some settings~\cite{Bittel2021}.  Classical relaxations and adaptive incumbents computed within the algorithm are allowed, with their computation and updates charged~\cite{Egger2021,LotshawIWS2026,MaciejewskiNDAWS2026}.  The family-wide fixed-angle guarantees in \Cref{sec:decoder} use a common setting rule.  The companion paper discusses the analogous advice boundary for its preparation theorem~\cite[Sec.~IV]{HadfieldApproxQOpt2026}.

Given an efficient quantum MaxCut sampler, returning the better of its cut and a deterministic \(1/2\)-approximate cut retains a \(1/2\) guarantee.  Improving this worst-case ratio requires a further bound on the quantum output.  Symmetry and locality shape fixed-depth QAOA analyses, including family-specific bounds on the \emph{decoded mean}~\cite{BravyiKlieschKoenigTang2020,ShaydulinHadfieldHoggSafro2021,MarwahaHadfield2022,FarhiGamarnikGutmann2020,BarakMarwaha2022}.  A mean upper bound limits the premise of \Cref{lem:mean-tail}, but without a tail bound it does not rule out rare high-scoring outputs that repetition could find.

Compressed encodings sharpen the distinction between quantum energy and classical output.  Pauli correlation encoding (PCE) represents binary variables by Pauli-string expectations~\cite{Sciorilli2025}, while quantum random access optimization (QRAO) uses single-qubit observables.  QRAO's optimal-energy promise problem can be quantum Merlin--Arthur (QMA)-complete on explicitly compiled MaxCut families, a classification of encoded energy rather than classical cut quality~\cite{HadfieldQRAO2026}.  The companion paper distinguishes encoded energy gain from decoded mean cut gain above the random-cut baseline~\cite[Sec.~II and Table~I]{HadfieldApproxQOpt2026}.  For standard quantum random access code (QRAC) MaxCut packings with \(d\in\{2,3\}\) variables per qubit and edge endpoints on different qubits, Fuller et al.'s local decoder transfers encoded energy gain to decoded mean cut gain by \(1/d^2\) for any input state~\cite{Fuller2024}.  The companion paper states the identity explicitly and gives a local POVM derivation~\cite[Eq.~(6) and App.~A]{HadfieldApproxQOpt2026}.  A mean-gain identity still needs a ratio bound and efficient repetition to meet our output definition.  The choice of compressed PCE encoding can leave small expectation-value margins, increasing the measurements needed to recover the encoded signs~\cite{HadfieldNoFreeCompression2026}.

\subsection{Broader problem families}
For unrestricted MaxCut and MaxSAT, the target described in \Cref{sec:separation-implications} is a uniform quantum output ratio above a randomized classical hardness threshold on the same explicit-input problem.  Known \(\NP\)-hard approximation gaps limit possible quantum ratios if \(\NP\nsubseteq\BQP\).  Restricted families can have different ratios, provided their definitions and the costs of input access are explicit.

Recent qudit QAOA work computes expected cut fractions for high-girth, regular Max-\(k\)-Cut families~\cite{ApteBoulebnaneJinOmanakuttanPerlinShaydulin2026,ApteKhanFarhiMarwahaBoulebnaneShaydulin2026}.  The later study reports that, in the degree and \(k\) regimes analyzed, QAOA at depth \(p\ge9\) and girth at least \(20\) exceeds its improved classical local-vector cut-fraction guarantee.  Because this compares QAOA with one classical algorithm on a restricted graph family, an unrestricted ratio gap would also require hardness against randomized classical algorithms at the same ratio.

Multiplicative ratios depend on objective normalization.  Adding a constant can improve a ratio without changing the optimizer, so the problems above fix their scores as part of the NPO definition.  Baseline-relative gain is useful when affine invariance is intended, but it defines a different taxonomy~\cite[Sec.~II]{HadfieldApproxQOpt2026}.

The folded-OPI result extends an oracle exact-search separation to a graded objective~\cite{YamakawaZhandry2024,KramerGilFusterJonesEisertSchreiber2026}.  The problem compendium of Hadfield et al.~\cite[App.~A]{Hadfield2019} gives QAOA mappings for MaxSAT, Set Cover, Graph Coloring, TSP, and other problems, along with then-known approximation algorithms and hardness results for many of them.  These are candidate objectives for output guarantees on every input.

\Needspace{7\baselineskip}
\subsection{Outlook}

The classical witnesses make the bounded-error quantum hierarchy strict under \(\NP\nsubseteq\BQP\), with further growing-factor distinctions in Appendix~\ref{app:growing-proofs}.  \(\CMO\) and \(\DLogFit\) instead give conditional gaps between quantum and randomized classical approximation on the same problem.  The decoded-mean result connects expectation analyses to these output classes when the decoder and its running time satisfy their definitions.

For familiar \(\APX\) problems such as MaxCut and MaxSAT, a quantum advantage would be a better guaranteed ratio, supported by classical hardness for the same instances and score normalization.  Recent OPI results prove classical-output gaps in membership queries and streaming space~\cite{KramerGilFusterJonesEisertSchreiber2026,WuYang2026,BoulandDingJain2026}.  An explicit-input time analogue, or a ratio gap for an unrestricted problem, remains open.  QAOA, DQI, and compressed encodings provide settings in which to pursue such guarantees for the final decoded solution.  Future work can identify promising problem families and regimes, then develop quantum algorithms and decoders tailored to their structure toward proving worst-case output guarantees.

\section*{Acknowledgments}

This work was supported under Air Force Research Laboratory Contract No.~\mbox{FA8750-25-C-B0040}.  Early drafts of my doctoral thesis explicitly named and defined \(\BQAPX\), \(\BQPTAS\), and \(\BQFPTAS\).  The final thesis instead reviews the classical classes and identifies quantum improvement as a goal~\cite[Sec.~5.2.1]{HadfieldThesis2018}.  The quantum definitions were omitted as we had not yet established nontrivial separations.  This paper gives precise bounded-error classical-output definitions and proves such separations.  OpenAI ChatGPT assisted with assembling subsequently developed results, consistency checks, editing, and literature synthesis, in particular with assessing the most recent DQI results discussed in Sec.~VI.A.  The author assumes full responsibility for the manuscript.

\bibliography{references}

\appendix
\onecolumngrid
\section{Glossary of complexity classes}
\label{app:class-glossary}

The approximation classes below concern classical-input NPO problems and classically checkable outputs.  Here bounded error means that, on every input, the algorithm returns a feasible solution meeting the stated ratio with probability at least \(2/3\); invalid outputs or outputs missing the ratio count as failures.  For schemes, one uniform algorithm receives the binary-encoded rational accuracy \(\epsilon\), and its running time includes the encoding length \(\ell(\epsilon)\).  The formal conventions are in \Cref{def:output} and \Cref{def:classes}.  Classical approximation terminology follows Refs.~\cite{PapadimitriouYannakakis1991,Ausiello1999,Vazirani2001}.

\begingroup
\renewcommand*\descriptionlabel[1]{\hspace\labelsep\normalfont\bfseries #1}
\begin{description}[style=multiline,leftmargin=2.8cm,labelwidth=2.6cm,itemsep=0pt,topsep=3pt]
\item[\(\mathsf{P}\)] Decision problems solvable by a deterministic polynomial-time algorithm.
\item[\(\NP\)] Decision problems whose yes instances have polynomial-length classical certificates verifiable in polynomial time.
\item[\(\mathsf{BPP}\)] Decision problems solvable by a randomized polynomial-time algorithm with error at most \(1/3\) on every input.
\item[\(\BQP\)] Decision problems solvable by a uniform polynomial-time quantum algorithm with error at most \(1/3\) on every input.
\item[\(\mathsf{QMA}\)] Decision problems whose yes instances have polynomial-size quantum witnesses accepted by a polynomial-time quantum verifier, with a constant completeness--soundness gap~\cite{AharonovAradVidick2013}.
\item[\(\FBPP\)] In search-problem statements, shorthand for a randomized polynomial-time solver returning a checkable valid answer with probability at least \(2/3\) on each input of the stated domain, which may be a promise family.
\item[\(\mathsf{NPO}\)] NP-style optimization problems with polynomial-size feasible solutions, polynomial-time instance and feasibility tests, and polynomial-time computable objective values.  We additionally require nonempty feasible sets and positive optima.
\item[\(\APX\)] NPO problems with a deterministic polynomial-time approximation at some fixed constant ratio.
\item[\(\PTAS\)] NPO problems with one deterministic scheme that, for each fixed \(\epsilon>0\), attains ratio \(1-\epsilon\) for maximization or \(1+\epsilon\) for minimization in time polynomial in the instance length.
\item[\(\FPTAS\)] \(\PTAS\) problems whose scheme runs in time polynomial jointly in the instance length, \(1/\epsilon\), and \(\ell(\epsilon)\).
\item[\(\RAPX\)] The \(\APX\) guarantee using a randomized polynomial-time classical algorithm that succeeds with probability at least \(2/3\).
\item[\(\RPTAS\)] The \(\PTAS\) guarantee using one randomized classical scheme with success probability at least \(2/3\) and polynomial instance-length time for each fixed accuracy.
\item[\(\RFPTAS\)] The randomized classical \(\PTAS\) with success probability at least \(2/3\) and time polynomial jointly in instance length, \(1/\epsilon\), and \(\ell(\epsilon)\).
\item[\(\BQAPX\)] The \(\APX\) guarantee using a uniform bounded-error quantum algorithm that returns a feasible classical solution.
\item[\(\BQPTAS\)] The \(\PTAS\) guarantee using one uniform bounded-error quantum scheme with polynomial instance-length time for each fixed accuracy and a feasible classical output.
\item[\(\BQFPTAS\)] The bounded-error quantum \(\PTAS\) with feasible classical output and time polynomial jointly in instance length, \(1/\epsilon\), and \(\ell(\epsilon)\).
\item[\(\RLOGAPX\)] Randomized classical polynomial-time approximation with factor \(O(\log(2+N))\), where \(N\) is the full input length.
\item[\(\BQLOGAPX\)] The same logarithmic-factor guarantee using a bounded-error quantum algorithm and feasible classical output.
\item[\(\RPOLYAPX\)] Randomized classical polynomial-time approximation with factor \(N^{O(1)}\).
\item[\(\BQPOLYAPX\)] The same polynomial-factor guarantee using a bounded-error quantum algorithm and feasible classical output.
\item[\(\REXPAPX\)] Randomized classical polynomial-time approximation with factor \(2^{N^{O(1)}}\). The algorithm still runs in polynomial time.
\item[\(\BQEXPAPX\)] The same exponential-factor guarantee using a bounded-error quantum algorithm and feasible classical output.
\item[\(\RAPTAS\)] A randomized classical scheme attaining a \((1+\epsilon)\) minimization ratio on inputs with optimum above a fixed \(\epsilon\)-dependent threshold, in polynomial time for every fixed accuracy.
\item[\(\mathsf{EPTAS}\)] A \(\PTAS\) with running time \(f(1/\epsilon)\operatorname{poly}(|x|)\), where the polynomial degree does not depend on \(\epsilon\).  Reading the accuracy input is also charged~\cite{CesatiTrevisan1997}.
\item[\(\mathsf{QPTAS}\)] A quasi-polynomial-time approximation scheme for each fixed accuracy.  Here the acronym is mentioned only to distinguish it from the quantum prefix ``BQ.''
\end{description}
\endgroup

The modifiers \(\NP\)-hard, \(\NP\)-complete, and QMA-hard describe reduction-based hardness, not additional approximation classes.  PCP refers to probabilistically checkable proofs.

\par\vspace*{1.2\baselineskip}
\twocolumngrid
\section{Proof of the conditional hierarchy}
\label{app:hierarchy}
\begin{proof}[Proof of \Cref{prop:hierarchy}]
For the standard Max-E3SAT variant with three distinct, noncomplementary literals per clause, the method of conditional expectation gives a deterministic \(7/8\)-approximation, so the problem lies in \(\BQAPX\).  A random assignment satisfies each clause with probability \(7/8\), and fixing its variables one at a time without decreasing the conditional expectation gives the deterministic bound.  For any fixed \(0<\eta<1/8\), it is \(\NP\)-hard to distinguish a satisfiable three-literal CNF formula with \(m\) clauses from one with optimum at most \((7/8+\eta)m\)~\cite[Theorem~6.5]{Hastad2001}.  Fix \(\epsilon<1/8-\eta\), and let \(u\mapsto F_u\) be the gap reduction from an \(\NP\)-complete language.  On input \(u\), compute \(F_u\), run the alleged \(\BQPTAS\), and accept only if its returned assignment is valid and satisfies more than \((7/8+\eta)m\) clauses.  On a YES instance, its successful assignment satisfies at least \((1-\epsilon)m>(7/8+\eta)m\) clauses.  On a NO instance, no assignment can cross the threshold.  Classical verification therefore gives a bounded-error quantum decider with one-sided soundness for the source language, implying \(\NP\subseteq\BQP\).  Thus Max-E3SAT is not in \(\BQPTAS\) under the stated assumption.

Restrict unweighted Planar Vertex Cover to graphs with at least one edge.  This does not change its exact hardness.  It has a classical \(\PTAS\) and hence lies in \(\BQPTAS\)~\cite{Baker1994}.  On an \(n\)-vertex instance, \(1\le\OPT\le n\).  A \(\BQFPTAS\) called with \(\epsilon=1/(2n)\) returns, on its successful event, a cover \(C\) with
\[
 |C|\le\left(1+\frac{1}{2n}\right)\OPT
 \le \OPT+\frac12<\OPT+1.
\]
Integrality makes \(C\) optimal.  Since exact Planar Vertex Cover is \(\NP\)-hard~\cite{GareyJohnsonStockmeyer1976}, this optimizer gives a bounded-error quantum decision algorithm for an \(\NP\)-complete language, again implying \(\NP\subseteq\BQP\).  Hence Planar Vertex Cover is not in \(\BQFPTAS\) under the assumption.
\end{proof}

\Needspace{12\baselineskip}
\section{Further approximation classes and classical witnesses}
\label{app:growing-proofs}

The definitions and proofs for \Cref{tab:extended-witnesses} follow.

\subsection{Beyond fixed ratios}
\label{sec:beyond-ratios}

\begin{definition}[Growing-factor classes]
\label{def:growing-factors}
Write \(N=|x|\) for the full bit length of the explicit input, and take the factor \(\rho(N)\ge1\) to be nondecreasing.  A \(\rho\)-approximation succeeds with probability at least \(2/3\) and returns a feasible solution of value at least \(\OPT(x)/\rho(N)\) for maximization, or cost at most \(\rho(N)\OPT(x)\) for minimization.  With uniform polynomial-time randomized classical algorithms, the factors \(O(\log(2+N))\), \(N^{O(1)}\), and \(2^{N^{O(1)}}\) define \(\RLOGAPX\), \(\RPOLYAPX\), and \(\REXPAPX\), respectively.  With uniform polynomial-time quantum algorithms producing feasible classical outputs, they define \(\BQLOGAPX\), \(\BQPOLYAPX\), and \(\BQEXPAPX\).  Constants and exponents may depend on the problem but not on its instance.  The conventions for invalid outputs and total running time are those of \Cref{def:output}.
\end{definition}

Here \(\mathrm{EXP}\) describes an approximation \emph{factor}, while all algorithms still run in polynomial time.  The inclusions from \(\APX\) through these classes hold in each model, and each randomized class lies in its quantum counterpart.

\begin{proposition}[Conditional growing-factor hierarchy]
\label{prop:growing-hierarchy}
If \(\NP\nsubseteq\BQP\), then
\[
\begin{aligned}
  \BQAPX&\subsetneq\BQLOGAPX\subsetneq\BQPOLYAPX\\
  &\subsetneq\BQEXPAPX.
\end{aligned}
\]
The corresponding chain with \(\mathsf{R}\) in place of \(\mathsf{BQ}\) is also strict.
\end{proposition}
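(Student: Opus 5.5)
The plan uses the three witnesses in the first rows of \Cref{tab:extended-witnesses}: Set Cover, Max Clique, and Positive-Weight TSP. Each class inclusion is immediate from the definitions, since a factor \(O(1)\) is \(O(\log(2+N))\), which is \(N^{O(1)}\), which is \(2^{N^{O(1)}}\). So only strictness needs proof. For each witness I show two things: membership in the larger randomized class (hence in its quantum counterpart, since \(\mathsf R\)-classes lie in \(\mathsf{BQ}\)-classes), and exclusion from the smaller quantum class under \(\NP\nsubseteq\BQP\). Exclusion from the smaller quantum class implies exclusion from the smaller randomized class, because \(\mathsf{BPP}\subseteq\BQP\). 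This gives both chains at once.

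\emph{Set Cover} separates \(\BQAPX\) from \(\BQLOGAPX\). The greedy algorithm has factor \(H(n)\le 1+\ln n=O(\log(2+N))\), so Set Cover is in \(\LOGAPX\) and hence in \(\RLOGAPX\). For exclusion I invoke the NP-hardness of \((1-\delta)\ln n\)-approximation (Dinur--Steurer). Suppose a \(\BQAPX\) algorithm achieves a constant \(\rho\). On sufficiently large instances of the gap reduction, \(\rho<(1-\delta)\ln n\). Then I run the algorithm, verify the returned cover classically, and compare its size with the threshold. Since every NO instance has optimum above the threshold, no returned cover can cross it, so the test is sound. Finitely many small instances are hardcoded. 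This gives \(\NP\subseteq\BQP\). The same gap-decision template as in Appendix~\ref{app:hierarchy} applies throughout.

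\emph{Max Clique} separates \(\BQLOGAPX\) from \(\BQPOLYAPX\). Returning a single vertex gives factor at most \(n\le N\), so Max Clique is in \(\RPOLYAPX\). Zuckerman shows that approximating within \(n^{1-\delta}\) is NP-hard. A factor \(O(\log(2+N))\) with \(N=O(n^2)\) is eventually below \(n^{1-\delta}\), so the same gap-decision argument excludes Max Clique from \(\BQLOGAPX\).

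\emph{Positive-weight TSP} on complete graphs with binary integer weights separates \(\BQPOLYAPX\) from \(\BQEXPAPX\). Any tour costs at most \(n\,w_{\max}\le 2^{N}\OPT\), since \(\OPT\ge n\ge1\) and \(w_{\max}\le 2^N\). Returning any tour therefore places TSP in \(\REXPAPX\). For exclusion I use the Sahni--Gonzalez reduction from Hamiltonian Cycle. Edges of the graph get weight \(1\), and non-edges get weight \(W=n\cdot 2^{p(N')}+1\), where the polynomial \(p\) will be chosen so that \(2^{p}\) dominates the claimed factor. The main obstacle is exactly this step. The factor is measured in the length \(N\) of the \emph{reduced} instance, and that length itself grows with \(\log W\). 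I need \(\rho(N)<W/n\). For \(\rho(N)\le C N^c\) with \(N=O(n^2\log W)\), taking \(W=2^{n}\) already suffices for large \(n\), since \(C(n^2\cdot n)^c\ll 2^n/n\). With that choice, a Hamiltonian graph has \(\OPT=n\), so a \(\rho\)-approximate tour uses only edges of the graph and has cost exactly \(n\). A non-Hamiltonian graph forces every tour to cost at least \(W\). Verifying the returned tour's cost therefore decides Hamiltonicity, which contradicts \(\NP\nsubseteq\BQP\). The remaining routine checks are the eventual-size thresholds in each gap argument and the fact that constants may not depend on the instance; the latter is guaranteed by \Cref{def:growing-factors}.
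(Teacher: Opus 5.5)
Your proposal is correct and follows the paper's proof essentially step for step. It uses the same three witnesses (Set Cover via greedy and Dinur--Steurer, Max Clique via a single vertex and Zuckerman, positive-weight TSP via an arbitrary tour and the Sahni--Gonzalez gap), the same verified gap-decision template, and the same observation that each excluded quantum class contains its randomized counterpart. The only difference is cosmetic: for TSP you take the nonedge weight $W=2^n$, so $N=O(n^3)$ and one reduction handles every polynomial factor, whereas the paper uses $M=n^K$ with $K>2c+3$ tuned to the claimed exponent $c$, and both choices work.
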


Greedy covering and an \(\NP\)-hard logarithmic gap place Set Cover between the logarithmic and constant-factor classes~\cite{Chvatal1979,DinurSteurer2014}.  A single vertex and an \(n^{1-\delta}\) hardness gap place Max Clique between polynomial and logarithmic factors~\cite{Zuckerman2007}.  Binary positive-weight TSP has an exponential upper factor but no polynomial factor under the same assumption.  The proofs account for full input bit length.

\begin{proposition}[An asymptotic guarantee need not be an absolute \(\PTAS\)]
\label{prop:bin-packing}
Define \(\RAPTAS\) for minimization as in \(\RPTAS\), except that, for each fixed \(\epsilon\), its \((1+\epsilon)\) bound need only hold when \(\OPT(x)\ge T_\epsilon\) for a constant \(T_\epsilon\).  Every run must still produce a feasible packing or declare failure.  Bin Packing on nonempty lists of positive rational item sizes at most one lies in \(\RAPTAS\cap\RAPX\).  If \(\NP\nsubseteq\BQP\), it does not lie in \(\BQPTAS\).
\end{proposition}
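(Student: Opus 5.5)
The proof has three parts: the randomized asymptotic scheme, the randomized constant-factor algorithm, and the quantum exclusion via the Partition gap.

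\emph{Membership in \(\RAPTAS\).} I would invoke the Fernandez de la Vega--Lueker scheme~\cite{FernandezDeLaVegaLueker1981}. For fixed \(\epsilon\) it runs in polynomial time and returns a feasible packing using at most \((1+\epsilon)\OPT+1\) bins. The steps are as follows.
\begin{enumerate}[label=(\roman*)]
\item Set aside the items of size below \(\epsilon/2\).
\item Apply linear grouping to the remaining large items, leaving a constant number of distinct sizes.
\item Pack the rounded instance exactly by enumerating the constantly many bin configurations.
\item Add the small items back greedily.
\end{enumerate}
The scheme is deterministic and uniform in \(\epsilon\), so it is trivially a randomized scheme with success probability one. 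To convert the additive bound into the ratio required by the definition, I would run the scheme at accuracy \(\epsilon/2\). This gives at most \((1+\epsilon/2)\OPT+1\) bins, which is at most \((1+\epsilon)\OPT\) once \(\OPT\ge T_\epsilon:=2/\epsilon\). Since the sizes are rational and given explicitly, every step is charged in polynomial time for each fixed \(\epsilon\).

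\emph{Membership in \(\RAPX\).} First Fit (or Next Fit) uses at most \(2\OPT\) bins. The reason is that any two consecutive bins together hold total size greater than one, so the number of bins is at most \(2\lceil\sum_i s_i\rceil\le 2\OPT\). This algorithm is deterministic, so \(\APX\subseteq\RAPX\) gives the claim.

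\emph{Exclusion from \(\BQPTAS\).} I would follow the integrality-gap pattern of \Cref{app:hierarchy}, using the \(\NP\)-complete Partition problem. Given positive integers \(a_1,\dots,a_n\) with even sum \(S\), form the Bin Packing instance with sizes \(2a_i/S\). These sizes are at most one whenever the instance is nontrivial; if some \(a_i>S/2\), answer NO directly. A perfect partition exists iff \(\OPT=2\), and otherwise \(\OPT\ge3\). Suppose a \(\BQPTAS\) is called at the fixed accuracy \(\epsilon=1/4\). On a YES instance, its successful output uses at most \(\lfloor 2\cdot 5/4\rfloor=2\) bins. On a NO instance, every feasible packing uses at least three bins. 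The decider accepts iff the returned feasible packing uses two bins, checked classically. This gives a bounded-error quantum decider for Partition, so \(\NP\subseteq\BQP\), contradicting the assumption. The key point is that \(\epsilon\) is a constant, so the running time is polynomial in \(|x|\) regardless of the \(\epsilon\)-dependence, which is exactly why the asymptotic scheme does not contradict this argument.

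\emph{Main obstacle.} I expect the main obstacle to be the bookkeeping in the \(\RAPTAS\) part rather than any conceptual step. Two things need checking there: that \(T_\epsilon\) is a constant depending only on \(\epsilon\), and that the exact configuration step really is polynomial for each fixed \(\epsilon\) with binary rational sizes. The Partition reduction is routine.
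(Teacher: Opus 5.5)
Your proposal is correct and follows the paper's proof step for step: the Fernandez de la Vega--Lueker scheme run at $\epsilon/2$ with a constant threshold $T_\epsilon$, the pairing argument for Next Fit giving factor two, and the Partition gap between two and three bins at $\epsilon=1/4$. Your explicit additive constant $1$ is a special case of the paper's $c_\epsilon$, and your item sizes $2a_i/S$ are the paper's $a_i/B$ with $S=2B$. One side remark is misattributed: the asymptotic scheme escapes the Partition argument not because $\epsilon$ is constant, but because its additive term makes the guarantee vacuous at $\OPT=2$ (for example, $(1+\tfrac14)\cdot2+1$ still permits three bins), so YES instances fall below $T_\epsilon$.
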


The scheme of Fernandez de la Vega and Lueker uses at most \((1+\epsilon)\OPT+c_\epsilon\) bins, with \(c_\epsilon\) depending only on accuracy~\cite{FernandezDeLaVegaLueker1981}.  Running it at \(\epsilon/2\) gives a \((1+\epsilon)\) ratio for \(\OPT\ge T_\epsilon=\lceil2c_{\epsilon/2}/\epsilon\rceil\), because then \(c_{\epsilon/2}\le(\epsilon/2)\OPT\).  The proof gives a constant-factor algorithm and the \(\NP\)-hard gap between two bins and at least three bins.  Because bin counts are integral, the additive term does not imply a \(\PTAS\).

Rectilinear TSP on distinct integer-grid points is a geometric alternative to Planar Vertex Cover.  Its Manhattan (\(L_1\)) tour cost is integral and exactly computable, and it has a classical \(\PTAS\) while remaining \(\NP\)-hard on polynomial-grid constructions~\cite{Arora1998,GareyGrahamJohnson1976}.  Thus it lies in \(\RPTAS\setminus\BQFPTAS\) under the same assumption.  Unlike unrounded Euclidean distance, its cost directly meets our exact rational-score NPO convention.

\subsection{Proofs}

\begin{proof}[Proof of \Cref{prop:growing-hierarchy}]
For Set Cover restrict to explicitly listed, nonempty universes covered by the input sets.  Greedy covering uses at most \(H_m\OPT\) sets when the universe has \(m\) elements, where \(H_m=\sum_{j=1}^m1/j\) is the \(m\)th harmonic number.  It satisfies \(H_m\le1+\ln m=O(\log(2+N))\), where \(N\) is the full input length~\cite{Chvatal1979}.  Dinur and Steurer give a deterministic \(\NP\)-hard gap excluding a \((1-\delta)\ln m\) approximation for every fixed \(\delta>0\)~\cite{DinurSteurer2014}.  Their hardness factor concerns universe size \(m\), while our class factor concerns full input length \(N\).  The growing gap in \(m\) already excludes every fixed factor.  The deterministic reduction preserves the stated assumption \(\NP\nsubseteq\BQP\).  If a bounded-error quantum algorithm gave a constant factor, run it on a gap-reduction output and compare the size of its verified cover to the gap threshold.  In the YES case it crosses that threshold with probability at least \(2/3\).  In the NO case no cover can cross.  This would put an \(\NP\)-complete language in \(\BQP\).  Thus Set Cover is in \(\RLOGAPX\setminus\BQAPX\).

For Max Clique on graphs with \(n\ge1\) vertices, any vertex gives a clique of size one, hence factor at most \(n\le N\).  Zuckerman's deterministic \(\NP\)-hard gap excludes ratios \(n^{1-\delta}\) for each fixed \(\delta>0\)~\cite{Zuckerman2007}.  Taking \(\delta=1/2\) and using the explicit graph encoding \(N=n^{O(1)}\), we have \(O(\log(2+N))=o(n^{1/2})\).  A hypothetical \(\BQLOGAPX\) algorithm, followed by classical clique checking and a size comparison, would solve that gap with bounded quantum error.  Consequently Max Clique lies in \(\RPOLYAPX\setminus\BQLOGAPX\).

For Positive-Weight TSP, take complete undirected graphs on at least three vertices with positive integer edge weights in binary.  The large-nonedge-weight Hamiltonian-cycle gap below is the classical Sahni--Gonzalez construction~\cite{SahniGonzalez1976}.  Every tour has cost at most \(nW\) when \(W\) is the maximum weight, and \(\OPT\ge n\).  Thus an arbitrary tour has factor at most \(W\le2^N\).  Suppose a \(\BQPOLYAPX\) algorithm has factor at most \(CN^c\) for fixed constants \(C,c\).  Given a graph \(G\) with \(n\) vertices, construct the TSP instance with weight one for edges of \(G\) and weight \(M=n^K\) for nonedges, with the fixed integer \(K>2c+3\).  Its encoding length is \(N=O(n^2\log n)\), with a constant depending on \(K\).  If \(G\) has a Hamiltonian cycle, \(\OPT=n\).  Otherwise every tour has cost at least \(M+n-1\).  Since \(CN^c n=O(n^{2c+1}(\log n)^c)\), our choice of \(K\) ensures \(CN^c n<M\) for all sufficiently large \(n\).  Run the proposed algorithm, verify the tour, and accept precisely if its cost is below \(M\).  This decides Hamiltonian Cycle with bounded quantum error, contradicting \(\NP\nsubseteq\BQP\).  Finitely many smaller inputs can be decided by a fixed lookup.  Hence Positive-Weight TSP is in \(\REXPAPX\setminus\BQPOLYAPX\).

These deterministic upper bounds give the quantum containments and also the randomized classical containments.  Since each excluded quantum class contains its randomized classical counterpart, the same three witnesses prove strictness of the randomized classical chain.
\end{proof}

\begin{proof}[Proof of \Cref{prop:bin-packing}]
The asymptotic scheme gives the \(\RAPTAS\) membership~\cite{FernandezDeLaVegaLueker1981}.  Next Fit places an item in the current bin if it fits and otherwise opens a new bin.  The first item of each new bin did not fit in its predecessor, so the loads of these two bins sum to more than one.  If Next Fit uses \(b\) bins, pairing consecutive bins shows that the total item size is greater than \(\lfloor b/2\rfloor\), while it is at most \(\OPT\).  Integrality gives \(b\le2\OPT\), so Bin Packing belongs to \(\RAPX\).

For the exclusion, take a Partition instance of positive integers summing to \(2B\), with each integer at most \(B\), and make item sizes \(a_i/B\).  They fit in two unit bins exactly when the integers split into two groups of sum \(B\).  Otherwise every packing uses at least three bins.  Suppose a \(\BQPTAS\) existed and set \(\epsilon=1/4\).  On a YES instance its successful output uses at most \(2(1+\epsilon)=5/2\) bins, and therefore at most two.  On a NO instance no two-bin output exists.  Verifying the packing gives a bounded-error quantum decision procedure for Partition, implying \(\NP\subseteq\BQP\).
\end{proof}

\begin{proof}[Rectilinear TSP as an alternative hierarchy witness]
Use integer coordinates, at least three distinct points, and the exact \(L_1\) length of a Hamiltonian tour.  Arora's \(\PTAS\) applies to this geometric norm, so the problem lies in \(\RPTAS\)~\cite{Arora1998}.  Rectilinear TSP is \(\NP\)-hard on polynomially bounded integer-grid constructions~\cite{GareyGrahamJohnson1976}.  On an instance of such a construction, compute the integral cost \(U\) of any fixed-order tour.  The construction has polynomially many points with polynomially bounded integer coordinates.  Thus \(U\) and \(1/\epsilon=2U\) are polynomial in the source length.  If a \(\BQFPTAS\) existed, calling it with the rational accuracy \(\epsilon=1/(2U)\) would take polynomial time and, on its successful event, return a tour of cost at most
\[
 (1+\epsilon)\OPT\le\OPT+\tfrac12.
\]
Integral tour costs force this tour to be optimal.  Exact comparison with the reduction's threshold would then decide an \(\NP\)-complete language in \(\BQP\).
\end{proof}

\section{Maximum-order proof details}
\label{app:cmo-proofs}
\subsection*{Factoring from an exponent}
\begin{proof}[Proof of \Cref{lem:exponent-factor}]
Write \(N=\prod_{i=1}^d p_i^{e_i}\), where \(d\ge2\), and write \(M=2^tu\) with \(u\) odd.  Sample \(b\) uniformly from \(\{1,\ldots,N-1\}\).  If \(1<\gcd(b,N)<N\), this gcd is already a factor.  Otherwise condition on \(b\) being a unit.  The same routine accepts an arbitrary positive trial exponent: for a unit, it reports failure if \(b^M\not\equiv1\pmod N\), and it returns a gcd only when that gcd is a verified nontrivial factor.  For the multiples of \(\lambda(N)\) covered by the lemma, the initial exponent check always passes.  By the Chinese remainder theorem its component projections are independent and uniform.

Let the Sylow-2 subgroup, the subgroup of 2-power order, of \(U_{p_i^{e_i}}\) have order \(2^{s_i}\).  Here \(\varphi(m)\) is Euler's totient, the number of residues coprime to \(m\).  Each \(p_i\) is odd, so \(\varphi(p_i^{e_i})=p_i^{e_i-1}(p_i-1)\) is even and \(s_i\ge1\).  Because \(M\) is a multiple of \(\lambda(N)\), \(t\ge\max_i s_i\) and \(u\) is divisible by every component exponent's odd part.  Raising to \(u\) therefore kills the odd-order factor.  Because \(u\) is odd, it also permutes the cyclic Sylow-2 subgroup, making \(b^u\) uniform on it.  Put
\(
 L_i=\log_2\ord_{p_i^{e_i}}(b^u).
\)
For \(0\le \ell\le s_i\),
\[
 \Pr[L_i=0]=2^{-s_i},
 \qquad
 \Pr[L_i=\ell]=2^{\ell-1-s_i}\quad(\ell\ge1).
\]
Indeed, a cyclic group of order \(2^{s_i}\) has one identity and \(2^{\ell-1}\) elements of order \(2^\ell\).
For example, if \(s_1\le s_2\), independence gives
\[
 \begin{aligned}
 \Pr[L_1=L_2]
 &=2^{-s_1-s_2}\left(1+\sum_{\ell=1}^{s_1}4^{\ell-1}\right)\\
 &=\frac{4^{s_1}+2}{3\cdot2^{s_1+s_2}}\le\frac12.
 \end{aligned}
\]
Relabel so that these are the two coordinates with smallest \(s_i\), in nondecreasing order.  If all \(L_i\) coincide, then \(L_1=L_2\).  Their independence bounds the probability of this event by \(1/2\), irrespective of the remaining components.
If the \(L_i\) are not all equal, then the squaring sequence \(b^u,b^{2u},\ldots,b^{2^tu}\), whose last term is \(b^M\equiv1\pmod N\),
contains a square root \(z\) of one whose component signs are not all equal.  Just before the last nontrivial term is squared, components with the largest \(L_i\) equal \(-1\), while those with smaller \(L_i\) equal \(+1\).  Hence \(z\not\equiv\pm1\pmod N\), and \(\gcd(z-1,N)\) is a nontrivial factor.  The gcd branch succeeds immediately on nonunits, and the conditional probability of unequal component orders on units is at least \(1/2\).  This is the standard Miller--Shor exponent-to-factoring procedure~\cite{Miller1976,Shor1997,Ekera2021}.
\end{proof}

\subsection*{Certified Maximum Order}
\begin{proof}[Proof of \Cref{thm:cmo}]
For the quantum upper bound, use amplified Shor calls to factor \(N\) and every relevant \(p-1\), classically verifying every returned prime factorization.  Write \(N=\prod_{i=1}^d p_i^{e_i}\) with distinct primes \(p_i\).  For an odd prime-power component \(p^e\), the unit group is cyclic of order \(\varphi(p^e)=p^{e-1}(p-1)\).  Its known factorization lets us sample and certify a generator by testing the prime divisors of \(\varphi(p^e)\).  Specifically, a unit \(a\) is a generator when \(a^{\varphi(p^e)/\ell}\not\equiv1\pmod{p^e}\) for each distinct prime \(\ell\mid\varphi(p^e)\).  Among units, the generator probability is
\(\varphi(\varphi(p^e))/\varphi(p^e)=\Omega(1/\max\{1,\log\log N\})\)~\cite{RosserSchoenfeld1962}.  Indeed, apply the standard lower bound for \(\varphi(m)/m\) to \(m=\varphi(p^e)\le p^e\le N\), absorbing the finitely many small values into the constant.  Use enough certified trials on each of the at most \(\log_2N\) components to make its failure probability at most \(1/[20(1+\log_2N)]\), and amplify the polynomially many verified factorization calls so their total failure probability is at most \(1/20\).  The union bound then gives success probability at least \(9/10\) in worst-case polynomial time.  For \(2^e\), use the trivial component for \(e=1\), the generator \(3\bmod4\) for \(e=2\), and \(5\), of order \(2^{e-2}\), for \(e\ge3\).  By the Chinese remainder theorem, combining maximum-order elements for the prime-power components produces an element whose order is their least common multiple,
\[
  \operatorname{lcm}_{1\le i\le d}\lambda(p_i^{e_i})=\lambda(N).
\]
This includes prime and prime-power inputs, for which the Chinese remainder step has only one component.
Merging the known component factorizations supplies the prime factorization of \(\lambda(N)\) required by the certificate.  If a truncated subroutine fails, output \((1,1,\varnothing)\).  Otherwise the output is optimal with probability at least \(2/3\).  The algorithm ignores \(\epsilon\), so it is a \(\BQFPTAS\).

For the converse, take a composite input to factoring.  Return \(2\) if it is even, and test integer roots of degrees \(2\) through \(\lfloor\log_2N\rfloor\), returning a nontrivial base if \(N\) is a perfect power.  The remaining input is odd and has at least two distinct prime factors.  Write \(n\) for the binary length of \(N\).  Increase the fixed constants in any polynomial-factor bound to integers \(C\ge1\) and \(c\ge0\).  Validate the output.  On the probability-at-least-\(2/3\) successful event it supplies a certified order \(r\ge\lambda(N)/(Cn^c)\).  Every element order divides the group exponent, hence
\[
  k_0=\frac{\lambda(N)}{r}
\]
is an integer with \(1\le k_0\le Cn^c\).  The constants are hard-coded into the reduction.  For each \(k\in\{1,\ldots,Cn^c\}\), run the explicit gcd-and-squaring routine from \Cref{lem:exponent-factor} on \(M=kr\).  The routine is defined for every \(M\), may report failure, and returns only verified nontrivial factors.  All trial exponents have polynomial bit length, and there are polynomially many trials.  At \(k=k_0\), the lemma gives conditional success probability at least \(1/2\).  Thus one outer run succeeds with probability at least \(1/3\), and constant independent repetition raises the probability above \(2/3\).  If factoring itself is in \(\FBPP\), the exact classical construction above shows that \(\CMO\in\RPOLYAPX\) as well.
\end{proof}

\section{Repetition counts for fixed-depth MaxCut QAOA}
\label{app:qaoa-repetitions}

For the QAOA guarantees discussed in \Cref{sec:decoder}, let \(p\) be the circuit depth, \(Y\) a measured cut, \(C(Y)\) its score, and \(\mu_p\) a family-wide lower bound on the mean approximation ratio.  A uniformly implementable circuit with \(\mathbb E[C(Y)]/\OPT\ge\mu_p\) on every graph of its family has, by \Cref{lem:mean-tail}, probability at least \((\mu_p-\tau)/(1-\tau)\) of exceeding \(\tau\OPT\) when \(0\le\tau<\mu_p<1\).  The chance that all \(k\) independent cuts miss this threshold is at most \(((1-\mu_p)/(1-\tau))^k\).  Taking the best of \(k_p(\tau)=\bigl\lceil\log(1/3)/\log((1-\mu_p)/(1-\tau))\bigr\rceil\) independent outputs succeeds with probability at least \(2/3\).  One run suffices for \(\mu_p=1\).  The full running-time convention of \Cref{sec:classes} applies.

\section{Discrete-Logarithm Fitting as a search-padding example}
\label{app:keyfit}

This example has the one-half threshold of \Cref{prop:search-padding} with an empirical-accuracy objective.  Its score is defined on an explicit sample rather than by the two-level score of the generic construction.

Let \(p=2q+1\) be prime with \(q\) prime (so \(p\) is a safe prime), let \(G\) be the order-\(q\) subgroup of \(\mathbb Z_p^*\), and let \(g\) generate \(G\).  We write \(\mathbb Z_q\) for exponents modulo \(q\).  The input group element is \(h=g^s\) for an unknown \(s\in\mathbb Z_q\).  On this promise family, \(\DLog\) is the search problem of recovering \(s\) from \((p,q,g,h)\).  Validity is recognized using primality tests and the conditions \(g^q\equiv h^q\equiv1\pmod p\), \(g\ne1\), and \(1\le g,h<p\).  Because \(q\) is prime, these conditions ensure that \(g\) has order \(q\) and \(h\in G\).  A labeled feature has the form \((r,z)\), where
\[
  z\equiv g^b h^r\pmod p,
  \qquad b\in\{0,1\},\quad r\in\mathbb Z_q.
\]
This is the nonredundant part of an ElGamal-style ciphertext~\cite{ElGamal1985}.  Because \(r\) is explicit, the usual first component \(g^r\) carries no additional information.  The label \(b\) itself is easy to read by comparing \(z\) with \(h^r\) and \(gh^r\).  The search task is to find an exponent hypothesis that fits the sample, not to decode an individual label.

\begin{definition}[Discrete-Logarithm Fitting problem]
An instance of \(\DLogFit\) consists of primes \(q\) and \(p=2q+1\), canonical residues \(1\le g,h<p\) satisfying \(g\ne1\) and \(g^q\equiv h^q\equiv1\pmod p\), and a nonempty list
\[
  S=((r_i,z_i,b_i))_{i=1}^m
\]
with \(m\ge1\), \(0\le r_i<q\), \(1\le z_i<p\), \(b_i\in\{0,1\}\), and
\[
 z_i\equiv g^{b_i}h^{r_i}\pmod p
\]
for every \(i\).  A feasible hypothesis is a tagged exponent \((\mathsf{exp},a)\), where \(a\in\mathbb Z_q\), or one of the tagged constants \(\mathsf{zero}\) and \(\mathsf{one}\).  The exponent hypothesis predicts
\[
 P_a(r,z)=
 \begin{cases}
 0,&z\equiv g^{ar}\pmod p,\\
 1,&z\not\equiv g^{ar}\pmod p,
 \end{cases}
\]
and its value is the fraction of the \(m\) labels \(b_i\) predicted correctly.  The constants always predict zero or one.
\end{definition}

Hypotheses have \(O(\log q)\) bits, and instance recognition, feasibility, and score evaluation are polynomial in the explicit list length using primality tests and modular exponentiation.  The secret exponent \(s\) scores one on every valid instance, so \(\OPT=1\), while the better constant predictor scores at least \(1/2\).  Thus the problem is in NPO and in \(\RAPX\).

\begin{theorem}[DLog-Fit threshold]
\label{thm:keyfit}
\(\DLogFit\) has a deterministic \(1/2\)-approximation and a bounded-error quantum algorithm that returns an optimum.  For every fixed rational \(0<\delta<1/2\), a randomized polynomial-time \((1/2+\delta)\)-approximation yields a randomized polynomial-time discrete-logarithm algorithm on valid safe-prime subgroup instances \((p,q,g,h)\).  Hence, if \(\DLog\notin\FBPP\) on this promise family, then
\[
\begin{gathered}
  \DLogFit\in\RAPX\cap\BQFPTAS,\\
  \DLogFit\notin\RPTAS.
\end{gathered}
\]
On the same promise family, \(\DLogFit\in\RPTAS\) if and only if
\(\DLogFit\in\RFPTAS\), and both conditions hold if and only if
\(\DLog\in\FBPP\).
\end{theorem}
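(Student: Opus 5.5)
The plan is to treat \(\DLogFit\) as a concrete instance of \Cref{prop:search-padding}. The key fact is that every exponent hypothesis other than the secret \(s\) scores at most \(1/2\) on a carefully balanced sample, so any score above \(1/2\) forces the output to be \((\mathsf{exp},s)\).

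\emph{Easy directions.} For the deterministic \(1/2\)-approximation, count the labels \(b_i\) and output whichever of \(\mathsf{zero}\) or \(\mathsf{one}\) agrees with the majority. It scores at least \(1/2=\OPT/2\), since \(\OPT=1\) is attained by \((\mathsf{exp},s)\) as noted before the theorem.

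For the quantum optimizer, run Shor's discrete-logarithm algorithm on \((g,h)\) in the order-\(q\) group \(G\), with the known prime order \(q\). Verify a returned candidate \(a\) by checking \(g^a\equiv h\pmod p\), amplify by constant repetition to success probability at least \(2/3\), and fall back to a constant hypothesis on failure. On success, \(h^{r}=g^{ar}\) for every \(r\), so \(P_a\) predicts every label correctly. This exact optimizer ignores \(\epsilon\), so \(\DLogFit\in\BQFPTAS\).

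The same recipe, with a hypothetical randomized \(\FBPP\) solver for \(\DLog\) in place of Shor, gives an exact randomized optimizer. This yields \(\DLogFit\in\RFPTAS\subseteq\RPTAS\) whenever \(\DLog\in\FBPP\).

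\emph{Converse.} Given a valid promise instance \((p,q,g,h)\), build the two-point sample
\[
S=\bigl((1,h,0),\,(1,gh,1)\bigr).
\]
It is a valid \(\DLogFit\) instance, since \(z\equiv g^{b}h^{1}\) in both entries. Now classify hypotheses.
\begin{itemize}[leftmargin=*,nosep]
\item Each constant predictor scores exactly \(1/2\).
\item For \((\mathsf{exp},a)\) with \(a\ne s\): on the first point \(h=g^s\not\equiv g^{a}\), because \(g\) has prime order \(q\) and \(a\not\equiv s\pmod q\). So \(P_a\) predicts \(1\) while the label is \(0\), and the score is at most \(1/2\).
\item \((\mathsf{exp},s)\) scores \(1\).
\end{itemize}
Hence any feasible output of score exceeding \(1/2\) is exactly \((\mathsf{exp},s)\). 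A randomized \((1/2+\delta)\)-approximation run on \(S\) therefore returns \(s\) with probability at least \(2/3\), and we can check it via \(g^s\equiv h\).

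A randomized \(\PTAS\) called at \(\epsilon=1/4\) gives ratio \(3/4>1/2\) and solves \(\DLog\) the same way. This closes the chain
\[
\RPTAS\Rightarrow\DLog\in\FBPP\Rightarrow\RFPTAS\Rightarrow\RPTAS,
\]
and gives the stated conditional placement \(\DLogFit\in\RAPX\cap\BQFPTAS\setminus\RPTAS\).

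\emph{Main obstacle.} The step needing most care is the claim that the reduction's sample defeats every non-secret hypothesis in the worst case, not merely on average. I expect the first point \((1,h,0)\) alone to suffice, because prime order \(q\) makes \(g^{a}=g^{s}\) equivalent to \(a\equiv s\). If a larger sample were needed, for instance to avoid degenerate tiny \(q\) handled by lookup, I would use \(m/2\) copies of each point so that the balance argument is unchanged.
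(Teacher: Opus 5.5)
Your proposal is correct. For the converse it takes a genuinely different route from the paper; the other parts match.

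\textbf{The paper's route.} The reduction draws a fresh random training list with independent uniform $r_i\in\mathbb Z_q$ and $b_i\in\{0,1\}$. It shows that every nonsecret hypothesis (each $a\ne s$ and both constants) has population accuracy exactly $1/2$. It then applies Hoeffding's inequality and a union bound over the $q+1$ nonsecret hypotheses, with $m\ge\ln(12(q+1))/(2\delta^2)$. With probability at least $11/12$, only $s$ then reaches empirical accuracy $1/2+\delta$, and two independent reduction trials push success above $2/3$.

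\textbf{Your route.} You exhibit one deterministic valid instance, $((1,h,0),(1,gh,1))$, on which every nonsecret hypothesis scores at most $1/2$. The constants score exactly $1/2$. Any $a\ne s$ misclassifies $(1,h,0)$ because $g$ has prime order $q$. Your validity checks ($1\le r=1<q$ and $1\le gh\bmod p<p$) are all that is needed.

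\textbf{What each buys.} Your argument is shorter and needs no concentration bound or choice of sample size. It is also strictly stronger. Scores on your instance lie in $\{0,1/2,1\}$, so any randomized algorithm guaranteeing a ratio strictly above $1/2$, however small the margin, yields a discrete-logarithm solver. This removes the paper's closing caveat that the theorem makes no claim about $o(1)$ improvements. The paper's random-sample argument instead supports the learning-theoretic reading. It shows that the barrier persists on typical i.i.d.\ ElGamal-style training lists, not only on an adversarially chosen instance with a repeated $r$.

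The rest of your plan coincides with the paper's: the majority-constant $1/2$-approximation, the verified Shor discrete logarithm with fallback, and the implication chain through $\RFPTAS$.
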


\begin{proof}
The deterministic baseline chooses the empirical majority label.  Shor's discrete-logarithm algorithm recovers \(s\), whose hypothesis classifies every example correctly~\cite{Shor1997}.  Repeat the quantum subroutine a fixed number of times and check each candidate \(\hat s\in\mathbb Z_q\) by computing whether \(g^{\hat s}\equiv h\pmod p\).  Since \(g\) has order \(q\), a candidate passes exactly when \(\hat s=s\).  If all truncated trials fail, return the majority constant.  This is always feasible and is optimal with bounded error.

For the converse, generate a training list with independent uniform \(r_i\in\mathbb Z_q\) and \(b_i\in\{0,1\}\), setting \(z_i\equiv g^{b_i}h^{r_i}\pmod p\).  Fix \(a\ne s\).  Conditional on \(b=0\), the congruence \(z\equiv g^{ar}\pmod p\) holds only for \(r=0\), so \(P_a\) is correct with probability \(1/q\).  Conditional on \(b=1\), it holds for the unique solution of
\[
  1+(s-a)r=0\pmod q,
\]
which is precisely the value on which \(P_a\) is wrong.  Its conditional accuracy is \(1-1/q\).  Averaging over the label gives population accuracy exactly \(1/2\).  The two constant hypotheses also have population accuracy \(1/2\).

There are \(q-1\) incorrect exponent hypotheses and two constant hypotheses, hence \(q+1\) nonsecret hypotheses.  For each fixed hypothesis, the sample correctness indicators are independent and lie in \([0,1]\).  Hoeffding's inequality bounds the chance that its empirical accuracy reaches \(1/2+\delta\) by \(e^{-2m\delta^2}\).  Choose
\[
  m\ge \frac{\ln(12(q+1))}{2\delta^2},
\]
so a union bound limits the probability of any nonsecret hypothesis reaching the threshold to \((q+1)e^{-2m\delta^2}\le1/12\).  With probability at least \(11/12\), all such hypotheses fall below it.  Since \(\delta\) is fixed, this list has polynomial length.  One reduction trial generates a fresh list, runs the alleged approximation algorithm once, rejects either constant tag, and accepts an exponent \(a\) only if \(g^a\equiv h\pmod p\).  On the concentration event every successful approximation output is \(s\).  The approximation guarantee applies to each generated list separately, so averaging over lists preserves its conditional success probability of at least \(2/3\).  One trial therefore recovers \(s\) with probability at least \((11/12)(2/3)=11/18\), with no false positive.  Two independent trials succeed with probability at least \(1-(7/18)^2=275/324>2/3\), giving an \(\FBPP\) discrete-logarithm algorithm.

If safe-prime discrete logarithm is in \(\FBPP\), use that algorithm to recover and verify \(s\), then output its exponent hypothesis.  On failure output the majority constant.  This is a randomized polynomial-time exact optimizer on its successful event and belongs to \(\RFPTAS\), proving the reverse implication.
\end{proof}

A \(\PTAS\) called at \(\epsilon=1/4\) contradicts the theorem.  Every labeled feature can be generated from \((g,h)\), and \(\OPT=1\) on valid instances.  This is a thin optimization encoding of discrete-logarithm search, related to cryptographic learning arguments but with a different task and access model~\cite{KearnsValiant1994,ServedioGortler2004,LiuArunachalamTemme2021,SwekeSeifertHangleiterEisert2021}.  The theorem makes no claim about improvements of size \(o(1)\).

\end{document}